\documentclass[12pt,oneside]{report}

\usepackage[letterpaper,left=1in,right=1in,top=1in,bottom=1in]{geometry}
\usepackage{setspace}
\usepackage{braket}
\usepackage{verbatim}

\usepackage{newtxtext,newtxmath}
\usepackage{microtype}
\usepackage{graphicx}
\usepackage{amsmath}

\usepackage{amssymb}

\usepackage{amsthm}
\usepackage{thmtools}
\usepackage{enumitem}
\usepackage{caption}
\usepackage{etoolbox}
\usepackage{indentfirst}

\usepackage{fancyhdr}
\usepackage{titlesec}

\titleformat{\chapter}[block]
  {\centering\bfseries}
  {\MakeUppercase{\chaptername\ \thechapter:}}
  {0.75em}
  {\MakeUppercase}
\titlespacing*{\chapter}{0pt}{0pt}{1.5em}

\titleformat{name=\chapter,numberless}[block]
  {\centering\bfseries\MakeUppercase}
  {}
  {0pt}
  {}
\titlespacing*{name=\chapter,numberless}{0pt}{0pt}{1.5em}

\usepackage{tocloft}
\usepackage[hidelinks]{hyperref}

\renewcommand{\cftchappresnum}{Chapter\ }
\renewcommand{\cftchapaftersnum}{:}
\newcommand{\ThesisTitleLineOne}{Quantum Dynamics: A Dilation-Based Approach}
\newcommand{\ThesisTitleLineTwo}{Finite-dimensional dilations of reduced quantum dynamics}
\newcommand{\AuthorName}{Caleb A. Mickelson, B.S.} 
\newcommand{\DegreeName}{MASTER OF SCIENCE}

\newcommand{\CommitteeChair}{Duy Nguyen Vu Hoang, Ph.D., Chair}
\newcommand{\CommitteeMemberA}{Mostafa Fazly, Ph.D.}
\newcommand{\CommitteeMemberB}{Jose Morales, Ph.D.}

\newcommand{\CollegeName}{College of Sciences}
\newcommand{\DepartmentName}{Department of Mathematics} 
\newcommand{\MonthYear}{May 2026} 

\newcommand{\UTSATitlePage}{%
  \begin{titlepage}
    \thispagestyle{empty}
    \begin{center}
      \vspace*{0in}

      {\bfseries
      \begin{spacing}{2}
      \MakeUppercase{\ThesisTitleLineOne}\par
      \MakeUppercase{\ThesisTitleLineTwo}\par
      \end{spacing}
      }

      \vspace{1in}
      by\par
      \vspace{1.0em}

      {\MakeUppercase{\AuthorName}\par}

      \vspace{2.0em}
      
      {\MakeUppercase{THESIS}\par}
      \begin{spacing}{1}
      \vspace{0.8em}
      Presented to the Graduate Faculty of\par
      The University of Texas at San Antonio\par
      in Partial Fulfillment\par
      of the Requirements\par
      for the Degree of\par
      \vspace{1.2em}
      {\MakeUppercase{\DegreeName}\par}

      \vspace{1.8em}

      \MakeUppercase{COMMITTEE MEMBERS:}\par
      \CommitteeChair\par
      \CommitteeMemberA\par
      \CommitteeMemberB\par

      \vfill

      \MakeUppercase{THE UNIVERSITY OF TEXAS AT SAN ANTONIO}\par
      \CollegeName\par
      \DepartmentName\par
      \MonthYear\par
      \end{spacing}
    \end{center}
  \end{titlepage}
  \clearpage
}

\newcommand{\UTSACopyrightPage}{%
\begin{spacing}{1}
  \thispagestyle{empty}
  \begin{center}
    \vspace*{\fill}
    Copyright \the\year\ Caleb Mickelson\\
    All rights reserved.
    \vspace*{\fill}
  \end{center}
  \end{spacing}
  \clearpage
}

\newcommand{\UTSAFrontChapterNoNumber}[1]{%
  \chapter*{#1}
  \thispagestyle{empty}
}

\newcommand{\UTSAFrontChapter}[2]{%
  \chapter*{#1}
  \addcontentsline{toc}{chapter}{#2}
  \thispagestyle{fancy}
}

\newcommand{\UTSAVitaTOCEntry}{%
  \addtocontents{toc}{\protect\noindent Vita\protect\par}
}

\newcommand{\UTSAListOfDefsThms}{%
  \cleardoublepage
  \phantomsection
  \addcontentsline{toc}{chapter}{List of Definitions and Theorems}
  \listoftheorems[ignoreall,show={definition,theorem,lemma,proposition,corollary}]
  \thispagestyle{fancy}
  \clearpage
}

\newcommand{\UTSAListOfFigures}{%
  \cleardoublepage
  \phantomsection
  \addcontentsline{toc}{chapter}{List of Figures}
  \listoffigures
  \thispagestyle{fancy}
  \clearpage
}

\newtheorem{theorem}{Theorem}[chapter]
\newtheorem{lemma}[theorem]{Lemma}
\newtheorem{proposition}[theorem]{Proposition}

\theoremstyle{definition}
\newtheorem{definition}[theorem]{Definition}

\theoremstyle{remark}

\newcommand{\stepparagraph}[1]{\par\smallskip\noindent\textbf{#1}\par\noindent}

\begin{document}

\UTSATitlePage

\pagenumbering{roman}
\setcounter{page}{2} 

\UTSACopyrightPage

\UTSAFrontChapterNoNumber{DEDICATION}
\begin{center}
\itshape
For Berkley
\end{center}
\clearpage

\UTSAFrontChapter{ACKNOWLEDGMENTS}{Acknowledgments}
I would like to express my sincere gratitude to my advisor, Dr.\ Duy Nguyen Vu Hoang, for his guidance, support, and patience throughout the development of this thesis. I am also grateful to my committee members, Dr.\ Mostafa Fazly and Dr.\ Jose Morales, for their time, thoughtful feedback, and encouragement. I am also grateful to Dr.\ Frederik vom Ende, whose work strongly influenced the direction of this thesis and whose email correspondence was helpful in shaping the project. I would further like to thank the faculty and staff of the Department of Mathematics at The University of Texas at San Antonio for their support during my graduate studies. Finally, I am especially grateful to my wife for her love, encouragement, and constant support throughout this process.

\vfill
\begin{center}
May 2026
\end{center}
\clearpage

\cleardoublepage
\phantomsection
\addcontentsline{toc}{chapter}{Abstract}
\thispagestyle{fancy}

\begin{center}
{\bfseries \MakeUppercase{\ThesisTitleLineOne}\par}
{\bfseries \MakeUppercase{\ThesisTitleLineTwo}\par}
\vspace{1em}
Caleb A. Mickelson, M.S.\par
The University of Texas at San Antonio, 2026\par
Supervising Professor: Duy Nguyen Vu Hoang, Ph.D.\par
\end{center}

\noindent
In the study of open quantum systems, one commonly describes the evolution of a system of interest through reduced dynamics, obtained by treating the environment indirectly rather than as a part of the full model. This thesis presents an expository account of an alternative, dilation-based viewpoint in the finite-dimensional setting, where a family of reduced dynamics is represented through unitary evolution on a larger system consisting of the original system together with an ancillary environment. After reviewing the reduced-dynamics perspective and the language of quantum channels, we formulate finite-dimensional quantum dynamics as channel-valued dynamical curves and use this framework to discuss Stinespring dilations of such curves. We then present exact dilation results for analytic dynamical curves, explain the singular behavior that can arise at $t=0$, and describe approximation results showing that Lipschitz-continuous dynamical curves admit approximate finite-dimensional Stinespring dilations. The thesis therefore provides a mathematically focused introduction to dilation-based modeling of quantum dynamics and argues that a change of perspective can lead to new ways of formulating problems in the theory of open quantum systems.

\clearpage

\tableofcontents
\clearpage

\UTSAListOfDefsThms
\UTSAListOfFigures

\cleardoublepage
\pagenumbering{arabic}

\chapter{Introduction}
Quantum systems are often described in two very different ways. On the one hand, the evolution of a closed quantum system is unitary, so in principle the dynamics are completely reversible and governed by the Schr\"odinger equation. On the other hand, the systems that appear in practice are rarely isolated. They interact with surrounding degrees of freedom, exchange information with an environment, and are therefore modeled as open systems. In that setting, the effective dynamics of the system of interest are typically no longer written as unitary evolution on the system alone, but instead as reduced dynamics, often expressed through a master equation.\\

The difference is not really a contradiction. It reflects two perspectives on the same problem. If one wants a fundamental description, then it is natural to model the joint system-environment evolution as unitary on a larger Hilbert space. If one wants a practical description of only the subsystem being observed, then it is natural to eliminate the environment and work with an effective non-unitary evolution on the system itself. The central idea behind this thesis is that these two viewpoints are systematically related by dilation theory, and that instead of beginning with a full system-environment model and tracing out the environment, one may ask when a given reduced dynamics can be realized by unitary dynamics on a larger space.\\

The main goal of this thesis is to present a clear finite-dimensional account of this dilation-based viewpoint for quantum dynamics. We work with quantum channels and channel-valued dynamical curves as a common language for describing both closed and open system evolution. Within this framework, the thesis explains how Stinespring dilations can be used to reconstruct enlarged unitary models from reduced dynamics, at least in a mathematically controlled sense. The emphasis is not on introducing a new theory, but on organizing a collection of recent results into a coherent narrative that is accessible to the reader with standard background in linear algebra, operator theory, and basic quantum mechanics.\\

Our discussion is restricted throughout to the finite-dimensional setting. This choice is deliberate. First, it allows the main constructions to be stated cleanly in terms of matrices, quantum channels, and finite ancillary systems. Second, it is the setting most compatible with explicit examples and with the broader motivation of modeling and simulation. In particular, one of the recurring themes of the thesis is that dilation-based descriptions can replace a potentially complicated or inaccessible environment by a finite-dimensional ancillary system that still reproduces the reduced dynamics of interest.\\

After reviewing the standard reduced-dynamics approach to open quantum systems, we introduce quantum dynamical curves and show how they provide a unified language for closed-system unitary evolution and open-system dynamics such as quantum dynamical semigroups. We then turn to the main dilation results discussed in the thesis: analytic dynamical curves admit Stinespring dilations with corresponding regularity for positive times; this construction may exhibit a singularity at $t=0$; and more recent approximation results show that Lipschitz-continuous dynamical curves can still be approximated by finite-dimensional Stinespring dilations. Together, these results show that the dilation-based perspective is broad enough to capture a substantial class of physically relevant reduced dynamics while remaining mathematically explicit in finite dimensions.\\

The point of the thesis, then, is not to argue that reduced dynamics should be abandoned. It is to show that they can often be lifted to a larger unitary picture, and that this larger picture can be studied in a precise and meaningful way. From that perspective, dilation theory gives a bridge between the effective description used for open systems and the unitary framework that underlies quantum mechanics more broadly.\\

The thesis is organized as follows. Chapter 2 reviews the necessary background on open quantum systems and finite-dimensional Lindblad dynamics. Chapter 3 sets up the channel-based perspective. Chapter 4 formulates the dilation problem, then discusses exact analytic dilations and the singular behavior that can occur at the initial time, and concludes with a treatment of approximate dilations for regular channel evolutions, especially in the finite-dimensional setting considered throughout this work.

\chapter{Open Quantum Systems}
Before turning to dilation-based descriptions of quantum dynamics, it is useful to recall the standard reduced-dynamics perspective used for open quantum systems. In this approach, one focuses on a subsystem of interest and treats the surrounding degrees of freedom indirectly, so that the effective evolution of the subsystem is no longer described solely by unitary dynamics on its own state space. This viewpoint provides the usual mathematical framework for modeling dissipation, decoherence, and irreversible behavior in quantum systems.\\

Our goal here is modest. In $\textsection2.1$, we explain the reduced-dynamics viewpoint and its role in the study of open systems. In $\textsection2.2$, we introduce the GKLS/Lindblad equation as the standard description of Markovian open-system dynamics. The chapter is meant only to establish background and notation for what follows, not to give a complete treatment of open quantum systems. The main shift in perspective comes in the next chapter, where quantum dynamics will be reformulated in terms of channel-valued curves.

\section{The Reduced-Dynamics Perspective}
In the idealized description of a closed quantum system, the state of the system evolves unitarily in time. For many systems of physical interest, however, this is not the most useful description. Real quantum systems are typically not isolated. They interact with surrounding degrees of freedom, exchange information with an external medium, and are influenced by measurement devices, control mechanisms, or ambient noise. In such situations, it is often neither realistic nor desirable to model every degree of freedom present in the full physical setup. Instead, one singles out a subsystem of interest and seeks an effective description of its time evolution alone.\\

This leads to the reduced-dynamics perspective. One begins with a larger quantum system consisting of the subsystem of interest together with an environment. The combined system is treated as a closed system, so that its evolution is unitary on the larger Hilbert space. The subsystem dynamics are then obtained by discarding the environmental degrees of freedom, that is, by passing from the joint state of the system and environment to the reduced state of the subsystem. In this way, the subsystem generally evolves non-unitarily, even though the total system evolves unitarily. The apparent loss of reversibility is therefore not a violation of quantum mechanics, but a consequence of describing only part of a larger closed system.\\

This point of view is standard in the theory of open quantum systems. It reflects a practical compromise between physical completeness and mathematical tractability. A full microscopic model of system and environment may be inaccessible, unnecessarily complicated, or simply irrelevant to the question being studied. In many applications, one is interested primarily in how the subsystem behaves, for instance whether coherence is lost, whether energy is exchanged with the surroundings, or whether the system relaxes toward some stationary state. The reduced state contains exactly the information needed for such questions, while suppressing details of the environment that may be unknown or experimentally uncontrollable.\\

At the same time, the reduced-dynamics approach comes with an important conceptual shift. Once the environment has been removed from the description, the subsystem can no longer be expected to follow a unitary equation of motion on its own state space. Its dynamics may instead display dissipation, decoherence, or irreversibility. Thus, the distinction between closed and open quantum systems is not a distinction between two incompatible theories, but between two levels of description. The first is a larger closed description for the total system, while the second is a reduced effective description for the subsystem. The latter is especially useful in applications, but it should be understood as arising from the former.\\

This perspective also explains why master equations play such a central role in the study of open systems. Rather than tracking the full unitary dynamics of system and environment , one seeks an evolution equation directly for the reduced state of the subsystem. In favorable regimes, this produces a tractable effective model for the open-system dynamics. The best-known example is the GKLS/Lindblad equation which describes Markovian reduced dynamics under suitable assumptions on the system-environment interaction. We will discuss this model in the next section, not as a complete account of open-system dynamics, but as the standard reduced description against which the later dilation-based viewpoint may be compared.\\

For the purposes of this thesis, the reduced-dynamics perspective serves mainly as a starting point. The central question pursued later is whether one can reverse this viewpoint in a controlled way: given a family of reduced dynamics, can one construct a larger finite-dimensional system in which those dynamics arise from unitary evolution?\\
Chapter 3 introduces the language needed to formulate this question precisely in terms of quantum channels and dynamical curves, and Chapter 4 discusses exact and approximate finite-dimensional dilation results. In that sense, the present section provides the physical and mathematical motivation for the rest of the thesis by emphasizing that open-system dynamics are often described only after the environment has been suppressed, whereas dilation theory asks how such reduced descriptions can be lifted back to an enlarged unitary model.

\section{The GKLS/Lindblad Description}
The reduced-dynamics perspective discussed in the previous section leads naturally to the problem of finding effective equations of motion for an open quantum system. In general, such equations can be quite complicated, since the environment may retain memory of its interaction with the system and later feed that information back into the system dynamics. A particularly important simplification occurs in the Markovian regime, where one assumes that the reduced evolution is memoryless in an appropriate sense. In this setting, the dynamics of the system are described by a quantum dynamical semigroup, and the corresponding generator takes a very specific form.\\

In finite dimensions, the standard description of Markovian open-system evolution is given by the Gorini-Kossakowski-Lindblad-Sudarshan equation, usually called the GKLS or Lindblad equation. It describes the evolution of a density operator $\rho$ on the system Hilbert space through a first-order differential equation of the form
\[
\frac{d}{dt}\rho=\mathcal{L}(\rho),
\]
where $\mathcal{L}$ is the generator of the reduced dynamics. The content of the GKLS theorem \cite{gorini_kossakowski_sudarshan_1976, lindblad_1976} is that, under the assumptions of complete positivity, trace preservation, and semigroup structure, the generator $\mathcal{L}$ must have the form
\[
\mathcal{L}(\rho)=-i[H,\rho]+\sum_j\left(L_j\rho L_j^\ast-\frac{1}{2}\left\{L_j^\ast L_j,\rho\right\}\right).
\]
Here $H$ is a self-adjoint operator on the system Hilbert space, and the operators $L_j$ describe the dissipative part of the evolution.\\

This decomposition has a clear interpretation. The commutator term $-i[H,\rho]$ is the Hamiltonian part of the dynamics and coincides with the usual unitary evolution of a closed quantum system. The remaining terms encode the influence of the environment on the subsystem and are responsible for genuinely open-system effects such as dissipation and decoherence. In particular, the operators $L_j$, often called Lindblad or jump operators, represent the channels through which the system exchanges information or energy with its surroundings. The anticommutator term ensures that the evolution remains trace preserving, while the full structure of the generator guarantees complete positivity of the resulting dynamics.\\

The importance of the GKLS equation lies in the fact that it provides a mathematically consistent and physically meaningful model for irreversible quantum evolution. It gives the canonical form of a Markovian master equation and appears throughout the theory of open quantum systems, quantum optics, and quantum control. At the same time, its scope should be understood correctly. The Lindblad form does not describe every possible open-system evolution. Instead, it describes those evolutions that are well approximated by a time-homogeneous, completely positive, trace-preserving semigroup. Thus it is best viewed as a particularly tractable and widely used class of reduced dynamics rather than as the general theory of open systems.\\

For the purposes of this thesis, the role of the GKLS/Lindblad equation is primarily contextual. It provides the standard reduced description of Markovian open-system dynamics and therefore serves as an important example of the kinds of evolutions one would like to study more systematically. In the next chapter, we move away from the master-equation viewpoint and introduce a more general finite-dimensional framework in which quantum dynamics are described as curves of quantum channels. This shift in language will make it possible to formulate the later dilation questions in a precise way and to compare reduced dynamics with enlarged unitary descriptions.

\chapter{Quantum Dynamical Curves}
In the previous chapter, quantum dynamics were discussed primarily from the perspective of open quantum systems and the Lindblad master equation. While that framework is well suited to the discussion of Markovian reduced dynamics, the present thesis requires a more general language in which quantum evolutions can be treated as objects in their own right. In finite dimensions, this is naturally accomplished by viewing the state evolution at each time as a quantum channel and the full time-dependent evolution as a curve of such channels.\\

The purpose of this chapter is to introduce that channel-based framework and to establish the terminology used throughout the remainder of the thesis. We begin by reviewing the definition of a quantum channel and the principal representations that will be relevant later, including Kraus, Choi, and Stinespring forms. We introduce the notation of a dynamical curve, which provides a unified way to describe time-dependent quantum processes in both closed and open settings. Within this framework, several important classes of dynamics appear naturally, including unitary evolutions, quantum dynamical semigroups, and Stinespring-type evolutions on enlarged Hilbert spaces.\\

This reformulation is not merely notational. It provides the setting in which the central problem of the thesis can be stated precisely: given a prescribed curve of quantum channels, under what conditions can it be realized as the reduced dynamics of a continuous unitary evolution on a larger finite-dimensional system?\\ 
Chapter 4 addresses this question in two directions. First, it studies exact analytic dilations and the singular behavior that may arise at the initial time. It then considers approximate dilation results for sufficiently regular channel evolutions. The present chapter therefore serves as the conceptual and mathematical bridge between the background material of Chapter 2 and the dilation theory developed in the remainder of the thesis.

\section{Quantum Channels and Their Representations}
We begin with the basic object used throughout the remainder of the thesis.

\begin{definition}[Quantum channel]
    Let $\mathcal{H}_1$ and $\mathcal{H}_2$ be finite-dimensional Hilbert spaces. A \emph{quantum channel} is a linear map
    \[
    \Phi:\mathcal{B}(\mathcal{H}_1)\to\mathcal{B}(\mathcal{H}_2)
    \]
    that is completely positive and trace preserving.
\end{definition}

In the finite-dimensional setting, quantum channels provide the natural mathematical description of physically admissible state transformations. If $\rho\in\mathcal{B}(\mathcal{H}_1)$ is a density operator, then $\Phi(\rho)\in\mathcal{B}(\mathcal{H}_2)$ is again a density operator. Complete positivity ensures that the map remains positive even when the system under consideration is viewed as part of a larger composite system, while trace preservation expresses conservation of total probability.\\

Since this thesis is concerned primarily with finite-dimensional dynamics on a fixed system space $\mathbb{C}^n$, we write
\[
\mathrm{CPTP}(n)
\]
for the collection of all completely positive, trace-preserving maps
\[
\Phi:\mathcal{B}(\mathbb{C}^n)\to\mathcal{B}(\mathbb{C}^n).
\]
This notation will be used throughout the remainder of the thesis.\\

Quantum channels include both closed- and open-system evolutions. In the closed case, the evolution of a state is given by conjugation with a unitary operator $U$, so that
\[
\Phi(\rho)=U\rho U^\dagger.
\]

In the open case, one generally obtains a non-unitary channel after discarding environmental degrees of freedom. Thus the channel formalism provides a common language in which both types of quantum evolution can be treated on equal footing.\\

For later purposes, it is important to describe a channel in several equivalent ways. Different representations emphasize different aspects of the same map: the Choi representation is especially convenient for questions of positivity and continuity, the Kraus representation gives an operator-sum description, and the Stinespring representation realizes the channel as the reduced action of a unitary evolution on a larger Hilbert space. Since the dilation problem studied later in this thesis moves repeatedly between these viewpoints, we review each of them here.\\

\subsection{The Choi Representation}
We first recall the Choi matrix of a channel.

\begin{definition}[Choi matrix]
    Let $\mathcal{H}=\mathbb{C}^n$ with orthonormal basis $\{e_i\}_{i=1}^n$, and let $\Phi\in\mathrm{CPTP}(n)$. The \emph{Choi matrix} of $\Phi$ is the operator
    \[
    J(\Phi)=\sum_{i,j=1}^n\ket{e_i}\bra{e_j}\otimes\Phi(\ket{e_i}\bra{e_j})\in\mathcal{B}(\mathbb{C}^n\otimes\mathbb{C}^n)
    \]
\end{definition}

Equivalently, $J(\Phi)$ may be viewed as the block matrix obtained by applying $\Phi$ to the matrix units $\ket{e_i}\bra{e_j}$. In finite dimensions, this construction identifies the channel with a positive operator on the tensor product space $\mathbb{C}^n\otimes\mathbb{C}^n$. The Choi matrix is particularly useful because complete positivity of $\Phi$ is equivalent to positivity of $J(\Phi)$. Trace preservation can likewise be expressed as a linear constraint on the partial trace of $J(\Phi)$.\\

For the present thesis, the main importance of the Choi representation is that it converts a channel into an ordinary matrix object. This will be useful later when regularity properties of channel-valued curves are studied through the corresponding regularity of their Choi matrices.

\subsection{The Kraus Representation}
A second standard description of a channel is the operator-sum representation.

\begin{definition}[Kraus representation]
    A quantum channel $\Phi\in\mathrm{CPTP}(n)$ is said to have a \emph{Kraus representation} if there exist operators $K_1,\dots,K_R\in\mathcal{B}(\mathbb{C}^n)$ such that
    \[
    \Phi(\rho)=\sum_{i=1}^RK_i\rho K_i^\dagger\hspace{6mm} \text{for all }\rho\in\mathcal{B}(\mathbb{C}^n),
    \]
    and
    \[
    \sum_{i=1}^RK_i^\dagger K_i=I.
    \]
    The operators $K_i$ are called \emph{Kraus operators} for $\Phi$.
\end{definition}

In finite dimensions, every quantum channel admits such a representation. Moreover, one may choose $R\leq n^2$. The Kraus representation is often the most concrete description of a channel, since it writes the action of $\Phi$ directly in terms of finitely many operators on the system space.\\

The Kraus description is not unique because distinct families of Kraus operators may determine the same channel. Nevertheless, the minimal number of Kraus operators needed to represent $\Phi$, often called the \emph{Kraus rank}, is an intrinsic quantity and agrees with the rank of the Choi matrix $J(\Phi)$.

\subsection{Passage from Choi to Kraus form}
Because the Choi matrix $J(\Phi)$ is positive, it admits a spectral decomposition
\[
J(\Phi)=\sum_{i=1}^R\lambda_i\ket{v_i}\bra{v_i}, \hspace{6mm}\lambda_i>0.
\]
If one defines
\[
\ket{\psi_i}=\sqrt{\lambda_i}\ket{v_i},
\]
then
\[
J(\Phi)=\sum_{i=1}^R\ket{\psi_i}\bra{\psi_i}.
\]
By reshaping each vector $\ket{\psi_i}$ into a matrix $K_i$ through the inverse of the vectorization map, one obtains Kraus operators satisfying
\[
\Phi(\rho)=\sum_{i=1}^RK_i\rho K_i^\dagger.
\]

Conversely, if a channel is given in Kraus form, then its Choi matrix may be reconstructed as
\[
J(\Phi)=\sum_{i=1}^R \mathrm{vec}(K_i)\mathrm{vec}(K_i)^\dagger.
\]
Thus the Choi and Kraus descriptions are equivalent and can be converted into one another constructively.

\subsection{The Stinespring representation}
The representation most relevant for the later chapters is the Stinespring form, since it expresses a channel as the reduced dynamics arising from a larger unitary evolution.

\begin{theorem}[Finite-dimensional Stinespring representation]
    Let $\Phi\in\mathrm{CPTP}(n)$. Then there exist a finite-dimensional ancillary Hilbert space $E$, a unit vector $\omega\in E$, and a unitary operator
    \[
    U:\mathbb{C}^n\otimes E\to\mathbb{C}^n\otimes E
    \]
    such that 
    \[
    \Phi(\rho)=\mathrm{Tr}_E\left(U(\rho\otimes\ket{\omega}\bra{\omega})U^\dagger\right)\hspace{6mm} \text{for all }\rho\in\mathcal{B}(\mathbb{C}^n).
    \]
\end{theorem}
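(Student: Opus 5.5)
The plan is to build the dilation directly from a Kraus representation, which the paper has already established exists for every $\Phi\in\mathrm{CPTP}(n)$ (with $R\le n^2$). Fix Kraus operators $K_1,\dots,K_R$ with $\Phi(\rho)=\sum_i K_i\rho K_i^\dagger$ and $\sum_i K_i^\dagger K_i=I$. Take $E=\mathbb{C}^R$ with orthonormal basis $\{f_1,\dots,f_R\}$, and set $\omega=f_1$. First define a linear map $V:\mathbb{C}^n\to\mathbb{C}^n\otimes E$ by
\[
V\psi=\sum_{i=1}^R K_i\psi\otimes f_i .
\]
The trace-preservation condition gives $V^\dagger V=\sum_i K_i^\dagger K_i=I$, so $V$ is an isometry.

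Next, relate $V$ to $\Phi$ by a partial-trace computation. For $\rho\in\mathcal{B}(\mathbb{C}^n)$,
\[
V\rho V^\dagger=\sum_{i,j}K_i\rho K_j^\dagger\otimes\ket{f_i}\bra{f_j},
\]
so $\mathrm{Tr}_E(V\rho V^\dagger)=\sum_i K_i\rho K_i^\dagger=\Phi(\rho)$. It suffices to verify this on rank-one operators and extend by linearity.

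The remaining step, and the only one needing care, is to promote $V$ to a unitary $U$ on $\mathbb{C}^n\otimes E$ with $U(\psi\otimes\omega)=V\psi$ for all $\psi$. I would define $U_0$ on the subspace $\mathcal{S}=\mathbb{C}^n\otimes\mathrm{span}\{\omega\}$ by $U_0(\psi\otimes\omega)=V\psi$. This map is isometric, since $\|V\psi\|=\|\psi\|=\|\psi\otimes\omega\|$. Hence $\dim\mathcal{S}=\dim U_0(\mathcal{S})=n$, and the orthogonal complements $\mathcal{S}^\perp$ and $U_0(\mathcal{S})^\perp$ in the finite-dimensional space $\mathbb{C}^n\otimes E$ have the same dimension $nR-n$. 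Choosing orthonormal bases of the two complements and mapping one onto the other gives an isometry $W:\mathcal{S}^\perp\to U_0(\mathcal{S})^\perp$. Then $U=U_0\oplus W$ is a surjective isometry, hence unitary. This is exactly where finite-dimensionality is used. If $R=1$, then $E=\mathbb{C}$ and $U=K_1$ is already unitary, so no extension is needed.

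Finally, $U(\rho\otimes\ket{\omega}\bra{\omega})U^\dagger=V\rho V^\dagger$. To check this on $\rho=\ket{\psi}\bra{\phi}$, note that $U(\psi\otimes\omega)=V\psi$ and $(\phi\otimes\omega)^\dagger U^\dagger=(V\phi)^\dagger$. Taking $\mathrm{Tr}_E$ of both sides and using the previous step gives the claimed identity. The main obstacle is only the bookkeeping in the unitary extension. Everything else reduces to the Kraus identities already available.
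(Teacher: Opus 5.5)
Your proof is correct and follows the route the paper sketches right after the theorem. You build the isometry $Vx=\sum_i K_i x\otimes f_i$ on $E\cong\mathbb{C}^R$ from a Kraus family, check $V^\dagger V=I$ and $\mathrm{Tr}_E(V\rho V^\dagger)=\Phi(\rho)$, and then complete $V$ to a unitary; your explicit completion, which matches orthonormal bases of $\mathcal{S}^\perp$ and $\operatorname{ran}(V)^\perp$ (both of dimension $nR-n$), fills in the step the paper only asserts and shows that no enlargement of $E$ beyond $\mathbb{C}^R$ is needed.
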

This theorem shows that every quantum channel may be realized by adjoining an ancillary system, evolving the joint system unitarily, and then tracing out the ancillary degreees of freedom. In this sense, the Stinespring representation connects non-unitary reduced dynamics with ordinary unitary evolution on a larger Hilbert space. We will often refer to this as the \emph{static Stinespring theorem} in order to distinguish it from the time-dependent dilation questions considered in Section 3.2.\\

It is often convenient to separate this construction into two stages. First, one introduces an isometry
\[
V:\mathbb{C}^n\to\mathbb{C}^n\otimes E
\]
such that
\[
\Phi(\rho)=\mathrm{Tr}_E(V\rho V^\dagger).
\]
One may then extend the isometry $V$ to a unitary operator on a larger space. In finite dimensions this extension is always possible after choosing the ancillary space large enough.\\

Starting from Kraus operators $K_1,\dots,K_R$, one standard choice is to let $E\cong\mathbb{C}^R$ with orthonormal basis $\{f_i\}_{i=1}^R$, and define
\[
Vx=\sum_{i=1}^RK_ix\otimes f_i,\hspace{6mm}x\in\mathbb{C}^n.
\]
Then $V^\dagger V=I$ by the Kraus completeness relation, so $V$ is an isometry, and the corresponding partial trace formula reproduces the original channel. This construction makes explicit how the Kraus and Stinespring viewpoints are related.\\

The ancillary space appearing in this representation should not automatically be interpreted as a physically distinguished environment. At this stage it is best regarded as a mathematical device that realizes the channel as a reduced unitary evolution. The question of whether such representations can be chosen coherently along an entire time-dependent family of channels is more delicate, and it is precisely this question that leads to the dilation problem studied in the following chapter.\\

In summary, a quantum channel may be described equivalently by its action as a completely positive trace-preserving map, by its Choi matrix, by a Kraus operator-sum representation, or by a Stinespring dilation on a larger Hilbert space. These descriptions emphasize different features of the same object, and each will play a role in what follows. We now pass from individual channels to time-dependent families of channels, which will serve as the basic language for finite-dimensional quantum dynamics in the remainder of the thesis.

\section{Dynamical Curves}
The preceding section described quantum channels as the basic objects representing physically admissible state transformations in finite dimensions. In order to study quantum dynamics, however, one must pass from a single channel to a time-dependent family of channels. This leads naturally to the notion of a dynamical curve.\\

Throughout this section, let $\mathcal{I}\subseteq\mathbb{R}$ denote a time interval, typically of the form $\mathcal{I}=[0,T]$ for some $T>0$, and let $\mathcal{H}_S\cong\mathbb{C}^n$ be the system Hilbert space.

\begin{definition}[Dynamical curve]
    A \emph{dynamical curve} on $\mathcal{H}_S$ is a map
    \[
    \Phi:\mathcal{I}\to\mathrm{CPTP}(n),\hspace{6mm}t\mapsto\Phi_t,
    \]
    that assigns to each time $t\in\mathcal{I}$ a quantum channel
    \[
    \Phi_t:\mathcal{B(\mathcal{H}_S})\to\mathcal{B(\mathcal{H}_S}).
    \]
\end{definition}

Thus, for each fixed time $t$, the map $\Phi_t$ describes the state transformation from the initial time to time $t$, while the full family $(\Phi_t)_{t\in\mathcal{I}}$ describes the evolution as a whole. In this way, a dynamical curve provides a channel-valued description of quantum dynamics.\\

This point of view is sufficiently general to include both closed- and open-system evolutions. In the closed case, one obtains channels induced by unitary conjugation. In the open case, one may obtain more general completely positive trace-preserving maps, for example from reduced system-environment dynamics or from solutions of a Markovian master equation. The usefulness of the notion lies precisely in the fact that these apparently different situations can be treated in a common mathematical language.\\

When needed, one may impose additional regularity conditions on the map $t\mapsto \Phi_t$, such as continuity, differentiability, analyticity, or Lipschitz continuity with respect to a chosen operator norm. Chapter 4 will make such assumptions when studying exact and approximate dilation problems. At the present stage, however, the basic definition above is sufficient.

\subsection{Unitary curves}
The simplest example of a dynamical curve arises from unitary evolution on the system Hilbert space itself.

\begin{definition}[Unitary curve]
    A dynamical curve $(\Phi_t)_{t\in\mathcal{I}}$ is called a \emph{unitary curve} if there exists a family of unitary operators
    \[
    U_t:\mathcal{H}_S\to\mathcal{H}_S,\hspace{6mm}t\in\mathcal{I},
    \]
    such that
    \[
    \Phi_t(\rho)=U_t\rho U_t^\dagger \hspace{6mm} \text{for all }\rho\in\mathcal{B}(\mathcal{H}_S),\,t\in\mathcal{I}.
    \]
\end{definition}
Unitary curves represent the dynamics of closed quantum systems. If $U_t$ is generated by a time-dependent Hamiltonian $H(t)$, then the corresponding channel family describes the familiar Schr\"odinger evolution at the level of density operators. In particular, unitary curves form a distinguished subclass of dynamical curves in which no information is lost to an environment.\\

From the present perspective, the importance of unitary curves is twofold. First, they provide the canonical example of reversible quantum dynamics. Second, they serve as the target objects in the dilation problem, in which one seeks to realize more general channel-valued evolutions as reduced parts of unitary dynamics on a larger Hilbert space.

\subsection{Quantum dynamical semigroups}
A second important class of channel-valued evolutions is given by quantum dynamical semigroups, which arise naturally in the study of time-homogeneous Markovian open-system dynamics.

\begin{definition}[Quantum dynamical semigroup]
    A dynamical curve $(\Phi_t)_{t\geq0}$ is called a \emph{quantum dynamical semigroup} if
    \[
    \Phi_0=\mathrm{id},\hspace{6mm}\Phi_{t+s}=\Phi_t\circ\Phi_s\hspace{4mm}\text{for all }s,t\geq0,
    \]
    and the map $t\mapsto\Phi_t$ is continuous in $t$.
\end{definition}

The semigroup property expresses time-homogeneous evolution. Specifically, the transformation from time $0$ to time $t+s$ is obtained by first evolving for time $s$ and then for time $t$, independently of the initial starting point. In finite dimensions, such semigroups are precisely the channel-valued evolutions generated by Lindblad-type operators, as discussed in Chapter 2. Thus quantum dynamical semigroups provide the channel formulation of standard Markovian reduced dynamics.\\

It is worth emphasizing that not every dynamical curve is a semigroup. A general time-dependent evolution may fail to satisfy the composition law above, either because the underlying generator depends explicitly on time or because the dynamics are not Markovian. The notion of a dynamical curve is therefore broader than that of a quantum dynamical semigroup, and this greater flexibility is essential for the later dilation results.

\subsection{Stinespring curves}
The class of dynamical curves most directly connected with the main problem of this thesis is obtained by combining the static Stinespring representation from Section 3.1 with time dependence.\\

Informally, one would like to realize a family of channels $(\Phi_t)_{t\in\mathcal{I}}$ by choosing, for each time $t$, a unitary operator on a larger Hilbert space whose reduced action reproduces $\Phi_t$. This leads to the following notion.

\begin{definition}[Stinespring curve]
    Let $\mathcal{H}_S\cong\mathbb{C}^n$ be the system Hilbert space, let $E$ be a finite-dimensional ancillary Hilbert space, and let $\omega\in E$ be a fixed unit vector. A dynamical curve $(\Phi_t)_{t\in\mathcal{I}}$ is said to admit a \emph{Stinespring curve representation} if there exists a family of unitary operators
    \[
    U_t:\mathcal{H}_S\otimes E\to\mathcal{H}_S\otimes E,\hspace{4mm}t\in\mathcal{I},
    \]
    such that
    \[
    \Phi_t(\rho)=\mathrm{Tr}_E\left(U_t(\rho\otimes\ket{\omega}\bra{\omega})U_t^\dagger\right)\hspace{6mm}\text{for all }\rho\in\mathcal{B}(\mathcal{H}_S),\,t\in\mathcal{I}.
    \]
    A family $(U_t)_{t\in\mathcal{I}}$ satisfying this condition will be called a \emph{Stinespring curve} for $(\Phi_t)_{t\in\mathcal{I}}$.
\end{definition}
This definition should be compared carefully with the static Stinespring theorem for a single channel. For an individual channel $\Phi_t$, the theorem guarantees the existence of some dilation of this form. The nontrivial issue in the time-dependent setting is whether the pointwise dilations may be chosen coherently in time, and, if so, with what regularity.\\

This distinction is fundamental for the present thesis. The later chapters do not merely ask whether each channel $\Phi_t$ is dilatable individually, that follows already from the static theorem. Rather, they ask whether the entire curve $t\mapsto\Phi_t$ can be realized by a correspondingly regular curve $t\mapsto U_t$ on a larger finite-dimensional Hilbert space. It is precisely this stronger question that leads to the exact and approximate dilation results studied later.

\subsection{Remarks on regularity and scope}
At this point it is useful to note that the concept of a dynamical curve is intentionally general. No semigroup property is assumed, no differential equation is imposed, and no preferred physical origin is built into the definition. This flexibility is essential because this thesis is concerned not only with standard Markovian evolutions but with general channel-valued curves satisfying suitable regularity assumptions.\\

The finite-dimensional restriction is also important. Since in this setting, all relevant operator spaces are finite-dimensional, so questions of continuity, differentiability, analyticity, and approximation can be formulated without the additional technical complications that arise in infinite-dimensional systems. Since the main results discussed later concern finite-dimensional exact analytic dilations and finite-dimensional approximation of channel evolutions, this framework is the natural one for the thesis.\\

In summary, a dynamical curve is a time-dependent family of quantum channels, and several important subclasses arise naturally within this framework. Unitary curves describe closed-system evolution, quantum dynamical semigroups describe time-homogeneous Markovian open-system dynamics, and Stinespring curves describe unitary realizations of channel-valued evolutions on enlarged Hilbert spaces. This language provides the setting in which the central dilation problem can now be formulated more precisely.

\chapter{Dilation-Based Quantum Dynamics}
In the previous chapter, quantum dynamics were reformulated as curves of quantum channels, and the central question of the thesis was expressed in this language: given a prescribed dynamical curve, under what conditions can it be realized as the reduced dynamics of a unitary evolution on a larger finite-dimensional Hilbert space?\\
The purpose of the present chapter is to study this question from the perspective of Stinespring dilations and to present the principal exact and approximate results that motivate the dilation-based viewpoint adopted throughout the thesis.\\

The key issue is not whether each individual channel $\Phi_t$ admits a Stinespring representation. In finite-dimensions, that follows already from the static Stinespring theorem for a single completely positive trace-preserving map. The deeper problem is whether these pointwise dilations can be chosen coherently in time, so that a full dynamical curve $t\mapsto\Phi_t$ is realized by a correspondingly regular curve of unitaries on an enlarged system. As discussed in Chapter 3, this passage from individual representations to a genuine Stinespring curve is not automatic, and the relevant questions are therefore questions of regularity as much as existence.\\

This chapter considers that problem in two complementary directions. First, we examine exact dilation results for analytic dynamical curves. In this setting, one can construct finite-dimensional Stinespring dilations whose unitary evolutions inherit analyticity for positive times. However, this exact construction is accompanied by an important limitation. Specifically, even when the reduced dynamics are well behaved, the corresponding dilation may develop singular behavior at the initial time $t=0$. This phenomenon shows that exact dilation of a channel-valued evolution can be substantially more delicate than the existence of a static Stinespring representation for each fixed time.\\

We then turn to approximation results, which weaken the regularity assumptions on the prescribed dynamics and broaden the scope of the dilation picture. In particular, recent work shows that Lipschitz-continuous dynamical curves can be approximated arbitrarily well by finite-dimensional Stinespring curves with controlled regularity. From the perspective of this thesis, these approximation results are significant because they show that the dilation-based framework is not limited to the analytic setting and remains meaningful for a much larger class of finite-dimensional reduced dynamics.\\

Taken together, the results of this chapter clarify both the power and the limitations of the finite-dimensional dilation approach. Exact realizations are available under strong regularity hypotheses, but singular behavior at the initial time shows that such realizations are more complicated. Approximation results, on the other hand, demonstrate that one can still recover enlarged unitary models in a mathematically controlled sense even when exact regular dilations are unavailable. This is the sense in which dilation theory provides a bridge between reduced descriptions of open quantum systems and unitary dynamics on larger Hilbert spaces.

\section{Stinespring Dilations of Dynamical Curves}
The finite-dimensional Stinespring theorem shows that every quantum channel can be realized as the reduced action of a unitary operator on a larger Hilbert space. For a single channel, this provides a structural representation of the map in terms of an enlarged closed-system evolution together with a fixed ancillary state. From the perspective of quantum dynamics, however, the relevant object is not an isolated channel but a time-dependent family of channels. The question is therefore whether a prescribed dynamical curve can be realized by a correspondingly regular time-dependent family of unitary dilations on a larger finite-dimensional system.\\

This is the basic dilation problem studied in the present chapter. If $(\Phi_t)_{t\in\mathcal{I}}$ is a dynamical curve on the system Hilbert space $\mathcal{H}_S\cong\mathbb{C}^n$, then for each fixed time $t$, the static Stinespring theorem guarantees the existence of some ancillary space, some ancilla state, and some unitary operator whose reduced action reproduces $\Phi_t$. What is not automatic is whether these pointwise choices may be made coherently as $t$ varies, so that the resulting family of unitary operators has useful regularity properties such as continuity, differentiability, or analyticity. It is this more refined question, rather than mere pointwise dilatability, that underlies the exact and approximate results discussed below.\\

In order to state the problem clearly, we first isolate the notion of a Stinespring dilation for a dynamical curve in the finite-dimensional setting used throughout this thesis.

\begin{definition}[Stinespring dilation of a dynamical curve]
    Let $\mathcal{I}\subseteq\mathbb{R}$ be a time interval, $\mathcal{H}_S\cong\mathbb{C}^n$ be the system Hilbert space, and
    \[
    \Phi:\mathcal{I}\to\mathrm{CPTP}(n),\hspace{4mm}t\mapsto\Phi_t,
    \]
    be a dynamical curve. A \emph{Stinespring dilation} of $(\Phi_t)_{t\in\mathcal{I}}$ consists of
    \begin{itemize}
        \item a finite-dimensional ancillary Hilbert space $E$,
        \item a fixed unit vector $\omega\in E$, and
        \item a family of unitary operators
        \[
        U_t:\mathcal{H}_S\otimes E\to\mathcal{H}_S\otimes E,\hspace{4mm}t\in\mathcal{I},
        \]
    \end{itemize}
    such that
    \[
    \Phi_t(\rho)=\mathrm{Tr}_E\left(U_t(\rho\otimes\ket{\omega}\bra{\omega})U_t^\dagger\right)
    \]
    for all $\rho\in\mathcal{B}(\mathcal{H}_S)$ and all $t\in\mathcal{I}$.\\
    
    A family $(U_t)_{t\in\mathcal{I}}$ satisfying this condition will be called a dilating unitary curve for $(\Phi_t)_{t\in\mathcal{I}}$.
\end{definition}
This definition is formally similar to the definition of a Stinespring curve introduced in Chapter 3, but its role here is more specific. The point is not simply to name a class of dynamical curves on enlarged Hilbert spaces, but to formulate the inverse problem central to this thesis, namely the problem of whether a channel-valued evolution on the system can be realized by a unitary curve on a larger space whose reduced dynamics reproduce it. In this way, the dilation problem asks when a reduced description may be lifted back to a closed-system model in finite dimensions.\\

Two features of this problem should be emphasized. First, the ancillary space and ancilla state are required to remain fixed throughout the evolution. This is what allows the family $(U_t)_{t\in\mathcal{I}}$ to be interpreted as a single enlarged dynamical model rather than as an unrelated collection of one-time Stinespring representations. Second, the regularity of the map $t\mapsto U_t$ is essential. Even if each channel $\Phi_t$ admits a finite-dimensional dilation individually, it does not follow that the dilating unitaries can be selected in a way that is continuous, differentiable, or analytic in time. Thus the central issue is not existence in isolation, but existence together with temporal coherence.\\

This distinction explains the organization of the remainder of the chapter. In Section 4.2, we consider exact finite-dimensional dilations for analytic dynamical curves and show that analyticity of the reduced dynamics leads to analyticity of the corresponding dilation for positive times. In Section 4.3, we then examine a fundamental limitation of this construction, namely the singular behavior that may occur at the initial time $t=0$. Finally, in Section 4.4, we turn to approximate dilation results, where exact realization is replaced by arbitrarily accurate approximation and the regularity assumptions on the original curve are weakened from analyticity to Lipschitz continuity. Together, these results clarify both the scope and the limitations of the dilation-based approach in finite-dimensional quantum dynamics.

\section{Analytic Curves Admit Stinespring Dilations}
\begin{figure}
    \centering
    \includegraphics[width=0.5\linewidth]{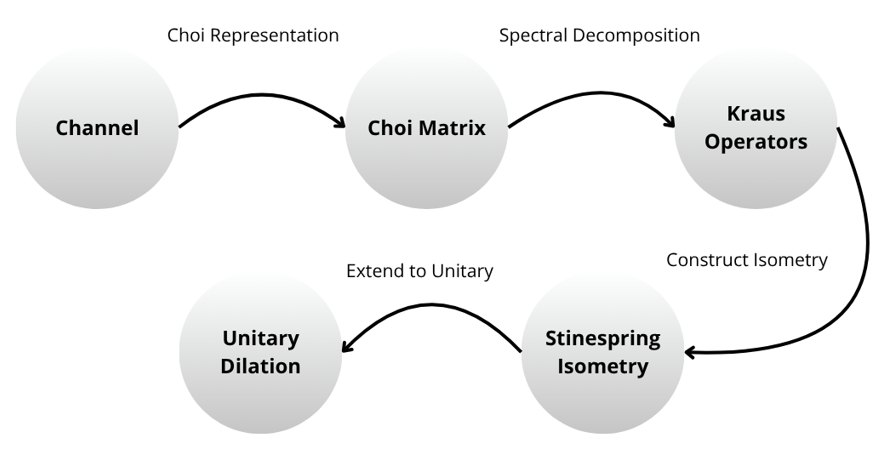}
    \caption{Proof Chain for Theorem 4.2}
    \label{fig:proof_chain}
\end{figure}
We now turn to the exact realization result for dynamical curves. The aim of this section is to show that, in finite dimensions, analyticity of the reduced channel-valued evolution is strong enough to produce a corresponding Stinespring dilation whose dilating unitary curve is analytic for all positive times. The result is based on the constructive method of Dive, Mintert, and Burgarth \cite{dive_mintert_burgarth_2015}, together with the finite-dimensional regularity-preserving passage from Stinespring isometries to unitary dilations emphasized in later work by vom Ende \cite{vom_ende_2024_stinespring}. The proof is important for the present thesis because it makes the dilation problem explicit at the level of an algorithm: starting from a curve of channels, one passes through the Choi representation, extracts time-dependent Kraus operators, and then builds a unitary dilation on an enlarged finite-dimensional space.

\begin{theorem}
    Let $n\in\mathbb{N}$, $\mathcal{I}=[0,t_f)$ with $t_f\in(0,\infty]$, and
    \[
    \Phi:\mathcal{I}\to\mathrm{CPTP}(n),\hspace{4mm}t\mapsto \Phi_t,
    \]
    be a dynamical curve which is analytic in $t$. Then there exists a finite-dimensional ancillary Hilbert space $E$ with
    \[
    \dim E\leq n^2,
    \]
    a unit vector $\omega\in E$, and a family of unitary operators
    \[
    U_t:\mathbb{C}^n\otimes E\to\mathbb{C}^n\otimes E,\hspace{4mm} t\in\mathcal{I},
    \]
    such that 
    \[
    \Phi_t(\rho)=\mathrm{Tr}_E\left(U_t(\rho\otimes\ket{\omega}\bra{\omega})U_t^\dagger\right)
    \]
    for all $\rho\in\mathcal{B}(\mathbb{C}^n)$, and the map $t\mapsto U_t$ is analytic on $(0,t_f)$.\\
    In particular, every finite-dimensional quantum dynamical semigroup admits such a dilation on positive times.
\end{theorem}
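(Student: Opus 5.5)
The plan follows the chain indicated above: pass to Choi matrices, diagonalize analytically to get analytic Kraus operators on $(0,t_f)$, form an analytic Stinespring isometry, and complete it analytically to a unitary. Since $t\mapsto\Phi_t$ is analytic, so is the Choi curve $t\mapsto J(\Phi_t)$. This is because $J$ is linear in $\Phi$ and finite-dimensional, so each block $\Phi_t(\ket{e_i}\bra{e_j})$ is analytic. Each $J(\Phi_t)$ is a positive semidefinite $n^2\times n^2$ matrix, so $t\mapsto J(\Phi_t)$ is an analytic curve of Hermitian matrices on a real interval.

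By Rellich's theorem on analytic perturbation of Hermitian matrices (Kato), there are real-analytic eigenvalue functions $\lambda_1(t),\dots,\lambda_{n^2}(t)$ and an analytic orthonormal eigenbasis $v_1(t),\dots,v_{n^2}(t)$ with
\[
J(\Phi_t)=\sum_{i=1}^{n^2}\lambda_i(t)\ket{v_i(t)}\bra{v_i(t)}.
\]
Every $\lambda_i(t)\ge 0$, and each $\lambda_i$ is either identically zero or has isolated zeros. I would then set $\psi_i(t)=\sqrt{\lambda_i(t)}\,v_i(t)$ and reshape $\psi_i(t)$ into $K_i(t)$ as in the passage from Choi to Kraus form. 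This gives $\Phi_t(\rho)=\sum_i K_i(t)\rho K_i(t)^\dagger$ with $\sum_iK_i(t)^\dagger K_i(t)=I$ and at most $n^2$ Kraus operators, which yields $\dim E\le n^2$.

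\textbf{Main obstacle: the square root.} The function $\sqrt{\lambda_i(t)}$ need not be analytic where $\lambda_i$ vanishes. At an interior zero, $\lambda_i\ge0$ forces the zero to have even order $2k$. Then $\lambda_i(t)=(t-t_0)^{2k}g(t)$ with $g(t_0)>0$, so $(t-t_0)^k\sqrt{g(t)}$ is an analytic square root. It differs from $\sqrt{\lambda_i}$ only by a sign flip, and a sign flip does not change $\ket{\psi_i}\bra{\psi_i}$. Choosing these analytic roots $\mu_i$ with $\mu_i^2=\lambda_i$ removes the interior problem. At the endpoint $t=0$, however, a zero of odd order is possible, for example $\lambda_i(t)\sim t$, where $\sqrt{t}$ is not analytic. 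This is exactly why the theorem asserts analyticity only on $(0,t_f)$. Since $\{t>0\}$ contains no endpoint zeros, the $K_i$ are analytic there, and they remain well defined (merely continuous) at $t=0$. In the same way, some care is needed because Rellich's theorem is naturally stated on open intervals containing each point. I would apply it on $(0,t_f)$ and handle $t=0$ by a pointwise (static) Kraus choice, since the statement demands only the dilation identity at $t=0$, not analyticity there.

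It remains to turn the Kraus operators into a unitary. Let $E=\mathbb{C}^{n^2}$ with basis $\{f_i\}$ and $\omega=f_1$, and define the isometry $V_tx=\sum_iK_i(t)x\otimes f_i$, which is analytic on $(0,t_f)$. This prescribes $U_t$ on the subspace $\mathbb{C}^n\otimes\omega$ through $U_t(x\otimes\omega)=V_tx$. Completing to a unitary analytically is the step I would take from vom Ende. The projection $P_t=I-V_tV_t^\dagger$ onto the complement of $\operatorname{ran}V_t$ is an analytic projection-valued curve of constant rank. Kato's transformation function $W_t$, solving $\dot W_t=[\dot P_t,P_t]W_t$ with $W_{t_1}=I$, is analytic, unitary, and satisfies $W_tP_{t_1}W_t^\dagger=P_t$. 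Fix a unitary $Q$ mapping $(\mathbb{C}^n\otimes\omega)^\perp$ onto $\operatorname{ran}P_{t_1}$ and define $U_t=V_t\oplus W_tQ$ on $(\mathbb{C}^n\otimes\omega)\oplus(\mathbb{C}^n\otimes\omega)^\perp$. This $U_t$ is unitary and analytic on $(0,t_f)$. At $t=0$, any static completion is used.

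Finally, $\mathrm{Tr}_E(U_t(\rho\otimes\ket\omega\bra\omega)U_t^\dagger)=\mathrm{Tr}_E(V_t\rho V_t^\dagger)=\Phi_t(\rho)$. For the corollary, a semigroup $\Phi_t=e^{t\mathcal{L}}$ is entire in $t$, so the theorem applies.
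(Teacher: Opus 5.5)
Your proposal is correct and follows the same chain as the paper's proof: Choi matrix, analytic spectral decomposition via Rellich/Kato, Kraus operators, Stinespring isometry, and analytic unitary completion. Your version is more careful than the paper's at the steps the paper only asserts. You replace the nonnegative root $\sqrt{\lambda_j(t)}$ by a signed analytic root at interior (necessarily even-order) zeros, where the paper's nonnegative root can fail to be analytic. You prove the completion lemma explicitly using Kato's transformation function $\dot W_t=[\dot P_t,P_t]W_t$. And you handle $t=0$ by a static choice, which the paper's proof does not treat explicitly.
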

\begin{proof}
    The only nontrivial step beyond the Choi-Kraus construction is the extension of the resulting analytic Stinespring isometry to an analytic family of dilating unitaries. The proof proceeds by the chain in fig.~\ref{fig:proof_chain}, where each step is finite-dimensional and explicit.\\
    First, since $\Phi_t\in\mathrm{CPTP}(n)$ for each $t$, its Choi matrix
    \[
    J(\Phi_t)\in\mathcal{B}(\mathbb{C}^n\otimes\mathbb{C}^n)
    \]
    is well defined. By the discussion in Chapter 3, the assignment $\Phi_t\mapsto J(\Phi_t)$ is linear, so analyticity of $t\mapsto \Phi_t$ implies analyticity of $t\mapsto J(\Phi_t)$. Moreover, each $J(\Phi_t)$ is positive semidefinite and Hermitian.\\

    Because $J(\Phi_t)$ is a finite-dimensional Hermitian matrix-valued analytic function of $t$, the analytic perturbation theory of Hermitian matrices applies. In particular, after choosing a labeling of the spectral data appropriately, one may select eigenvalues and eigenvectors of $J(\Phi_t)$ analytically in $t$ on the real interval under consideration. Thus, for $t>0$, one may write
    \[
    J(\Phi_t)=\sum_{j=1}^{R_t}\lambda_j(t)\ket{v_j(t)}\bra{v_j(t)},
    \]
    where the eigenvalue functions $\lambda_j(t)$ and eigenvector functions $\ket{v_j(t)}$ are analytic, and where $R_t\leq n^2$ is the Kraus rank. This is the key spectral input in the construction.\\
    From this spectral decomposition one constructs Kraus operators exactly as in the Choi-to-Kraus conversion done in Chapter 3. Define
    \[
    \ket{\psi_j(t)}:=\sqrt{\lambda_j(t)}\ket{v_j(t)},
    \]
    and let
    \[
    K_j(t):=\mathrm{vec}^{-1}(\ket{\psi_j(t)}).
    \]
    Then
    \[
    \Phi_t(\rho)=\sum_{j=1}^{R_t}K_j(t)\rho K_j(t)^\dagger.
    \]
    Since the Choi-to-Kraus passage is constructive, this produces a family of Kraus operators directly from the spectral data of the Choi matrix. The only delicate point is the square root of the eigenvalues. For positive times, this still yields Kraus operators that are analytic in $t$, while the possible breakdown at $t=0$ is precisely the issue deferred to the next section.\\

    Next, fix an ancillary space $E\cong\mathbb{C}^{n^2}$ with orthonormal basis $e_1,\dots,e_{n^2}$, padding the Kraus family by zero operators if necessary so that it has exactly $n^2$ terms. Define
    \[
    V_t:\mathbb{C}^n\to\mathbb{C}^n\otimes E,\hspace{6mm}V_tx:=\sum_{j=1}^{n^2}K_j(t)x\otimes e_j.
    \]
    Because the Kraus operators satisfy the completeness relation
    \[
    \sum_{j=1}^{n^2}K_j(t)^\dagger K_j(t)=I,
    \]
    the map $V_t$ is an isometry for each $t$. Moreover, since the entries of the $K_j(t)$ are analytic for $t>0$, the map $t\mapsto V_t$ is analytic on $(0,t_f)$.\\

    To complete the construction, it remains to pass from the analytic family of isometries $V_t$ to an analytic family of unitaries on the enlarged space. We isolate this step as a lemma.

    \begin{lemma}
        Let $\mathcal{I}\subseteq\mathbb{R}$ be an interval, $\mathcal{H}$ and $\mathcal{K}$ be finite-dimensional Hilbert spaces, and
        \[
        V_t:\mathcal{H}\to\mathcal{H}\otimes \mathcal{K},\hspace{4mm} t\in\mathcal{I},
        \]
        be an analytic family of isometries. Then there exist a fixed unit vector $\omega\in \mathcal{K}$ and an analytic family of unitary operators
        \[
        U_t:\mathcal{H}\otimes \mathcal{K}\to\mathcal{H}\otimes \mathcal{K},\hspace{4mm}t\in\mathcal{I},
        \]
        such that
        \[
        V_tx=U_t(x\otimes\omega)
        \]
        for all $x\in\mathcal{H}$ and all $t\in\mathcal{I}$.
    \end{lemma}
    \begin{proof}
        Fix a unit vector $\omega\in \mathcal{K}$, and identify $\mathcal{H}$ with the subspace $\mathcal{H}\otimes\omega\subseteq\mathcal{H}\otimes \mathcal{K}$. For each $t$, the map $V_t$ is an isometric embedding of this fixed subspace into $\mathcal{H}\otimes\mathcal{K}$. Since $t\mapsto V_t$ is analytic, the corresponding family of range projections
        \[
        P_t:=V_tV_t^\dagger
        \]
        is analytic as well. In finite dimensions, an analytic family of isometries can be completed to an analytic family of unitaries. Equivalently, one may choose the complementary columns analytically so as to obtain a full unitary matrix depending analytically on $t$. Applying this completion to $V_t$ yields an analytic unitary family $U_t$ on $\mathcal{H}\otimes\mathcal{K}$ satisfying
        \[
        U_t(x\otimes\omega)=V_tx
        \]
        for all $x\in\mathcal{H}$. This regularity-preserving passage from Stinespring isometries to unitary dilations is standard in the finite-dimensional setting and is consistent with the equivalence results used by vom Ende \cite{vom_ende_2024_stinespring} for regular Stinespring curves.
    \end{proof}

    Returning to the proof of the theorem, apply the lemma to the analytic isometry family $V_t$. This yields an analytic unitary family $U_t$ on $\mathbb{C}^n\otimes E$ and a fixed unit vector $\omega\in E$ such that
    \[
    V_tx=U_t(x\otimes\omega).
    \]
    Therefore, for every $\rho\in\mathcal{B}(\mathbb{C}^n)$,
    \[
    \Phi_t(\rho)=\sum_{j=1}^{n^2}K_j(t)\rho K_j(t)^\dagger=\mathrm{Tr}_E\left(U_t(\rho\otimes\ket{\omega}\bra{\omega})U_t^\dagger\right).
    \]
    Hence $(U_t)_{t\in(0,t_f)}$ is an analytic Stinespring dilation of $\Phi$, this completes the construction.
\end{proof}

The proof above is constructive in a concrete sense. Starting from the dynamical curve $\Phi_t$, one first computes the Choi matrix $J(\Phi_t)$, then diagonalizes it analytically, then converts the resulting spectral data into Kraus operators, and finally packages those operators into a Stinespring isometry, which is completed analytically to a unitary dilation. This explicit chain is one of the main reasons the analytic setting is especially tractable. The following subsection illustrates this construction in a simple one-qubit example. At the same time, the proof also identifies the source of the later difficulty, namely, the square root appearing in the passage from eigenvalues of the Choi matrix to Kraus operators is precisely where singular behavior can enter. For that reason, the exact dilation theorem for positive times is naturally followed by a separate analysis of what happens at $t=0$.

\subsection{A toy example: pure dephasing}

We now illustrate the exact dilation construction on a simple one-qubit channel family. The purpose of this example is not to introduce new theory, but to carry out explicitly the same constructive chain used in the proof of Theorem 4.2:
\[
\Phi_t \;\longrightarrow\; J(\Phi_t)\;\longrightarrow\; \{K_1(t),K_2(t)\}\;\longrightarrow\; V_t\;\longrightarrow\; U_t.
\]

Consider the generator
\[
\mathcal{L}(\rho)=\frac{\gamma}{2}\bigl(\sigma_z\rho\sigma_z-\rho\bigr),
\]
and the associated channel curve
\[
\Phi_t=e^{t\mathcal{L}}.
\]
For
\[
\rho=
\begin{pmatrix}
\rho_{00} & \rho_{01}\\
\rho_{10} & \rho_{11}
\end{pmatrix},
\]
the channel acts as
\[
\Phi_t(\rho)=
\begin{pmatrix}
\rho_{00} & e^{-\gamma t}\rho_{01}\\
e^{-\gamma t}\rho_{10} & \rho_{11}
\end{pmatrix}.
\]
Thus the diagonal entries remain unchanged, while the off-diagonal entries decay exponentially. This is the standard pure-dephasing channel.

\stepparagraph{Step 1: Form the Choi matrix.}
Using the computational basis and matrix units \(\ket{i}\bra{j}\), define
\[
J(\Phi_t)=\sum_{i,j=0}^1 \ket{i}\bra{j}\otimes \Phi_t(\ket{i}\bra{j}).
\]
In this example,
\[
J(\Phi_t)=
\begin{pmatrix}
1 & 0 & 0 & e^{-\gamma t}\\
0 & 0 & 0 & 0\\
0 & 0 & 0 & 0\\
e^{-\gamma t} & 0 & 0 & 1
\end{pmatrix}.
\]

\stepparagraph{Step 2: Diagonalize the Choi matrix.}
Set
\[
a(t):=e^{-\gamma t}.
\]
Then the nonzero eigenvalues of \(J(\Phi_t)\) are
\[
\lambda_1(t)=1+a(t),\qquad \lambda_2(t)=1-a(t),
\]
with corresponding eigenvectors
\[
|v_1(t)\rangle=\frac{1}{\sqrt{2}}
\begin{pmatrix}
1\\0\\0\\1
\end{pmatrix},
\qquad
|v_2(t)\rangle=\frac{1}{\sqrt{2}}
\begin{pmatrix}
1\\0\\0\\-1
\end{pmatrix}.
\]
Hence
\[
J(\Phi_t)=\lambda_1(t)|v_1(t)\rangle\langle v_1(t)|
+\lambda_2(t)|v_2(t)\rangle\langle v_2(t)|.
\]

\stepparagraph{Step 3: Build Kraus operators from the spectral data.}
Define
\[
|\psi_j(t)\rangle=\sqrt{\lambda_j(t)}\,|v_j(t)\rangle.
\]
Then
\[
|\psi_1(t)\rangle
=
\sqrt{\frac{1+a(t)}{2}}
\begin{pmatrix}
1\\0\\0\\1
\end{pmatrix},
\qquad
|\psi_2(t)\rangle
=
\sqrt{\frac{1-a(t)}{2}}
\begin{pmatrix}
1\\0\\0\\-1
\end{pmatrix}.
\]
Unvectorizing gives
\[
K_1(t)=\operatorname{vec}^{-1}(|\psi_1(t)\rangle)
=
\sqrt{\frac{1+a(t)}{2}}
\begin{pmatrix}
1&0\\
0&1
\end{pmatrix},
\]
\[
K_2(t)=\operatorname{vec}^{-1}(|\psi_2(t)\rangle)
=
\sqrt{\frac{1-a(t)}{2}}
\begin{pmatrix}
1&0\\
0&-1
\end{pmatrix}.
\]
Therefore
\[
\Phi_t(\rho)=K_1(t)\rho K_1(t)^\ast+K_2(t)\rho K_2(t)^\ast.
\]

\stepparagraph{Step 4: Package the Kraus operators into an isometry.}
Let
\[
E=\mathbb{C}^2
\]
with orthonormal basis \(\{\ket{e_1},\ket{e_2}\}\). Define
\[
V_t x=K_1(t)x\otimes \ket{e_1}+K_2(t)x\otimes \ket{e_2}.
\]
Equivalently,
\[
V_t=
\begin{pmatrix}
K_1(t)\\
K_2(t)
\end{pmatrix}.
\]
Since the Kraus operators satisfy the completeness relation, one has
\[
V_t^\ast V_t=I.
\]
Moreover,
\[
\Phi_t(\rho)=\operatorname{tr}_E\!\left(V_t\rho V_t^\ast\right).
\]

\stepparagraph{Step 5: Complete the isometry to a unitary.}
Define
\[
\alpha(t):=\sqrt{\frac{1+e^{-\gamma t}}{2}},
\qquad
\beta(t):=\sqrt{\frac{1-e^{-\gamma t}}{2}}.
\]
Then
\[
V_t=
\begin{pmatrix}
\alpha(t) & 0\\
0 & \alpha(t)\\
\beta(t) & 0\\
0 & -\beta(t)
\end{pmatrix}
=
\begin{pmatrix}
\vert & \vert\\
v_1(t) & v_2(t)\\
\vert & \vert
\end{pmatrix},
\]
where
\[
v_1(t)=
\begin{pmatrix}
\alpha(t)\\
0\\
\beta(t)\\
0
\end{pmatrix},
\qquad
v_2(t)=
\begin{pmatrix}
0\\
\alpha(t)\\
0\\
-\beta(t)
\end{pmatrix}.
\]
Choose
\[
w_1(t)=
\begin{pmatrix}
-\beta(t)\\
0\\
\alpha(t)\\
0
\end{pmatrix},
\qquad
w_2(t)=
\begin{pmatrix}
0\\
\beta(t)\\
0\\
\alpha(t)
\end{pmatrix},
\]
which form an orthonormal basis of \(\operatorname{ran}(V_t)^\perp\). Then define
\[
U_t=\big[\,v_1(t)\;v_2(t)\mid w_1(t)\;w_2(t)\,\big].
\]
In matrix form,
\[
U_t=
\begin{pmatrix}
\alpha(t) & 0 & -\beta(t) & 0\\
0 & \alpha(t) & 0 & \beta(t)\\
\beta(t) & 0 & \alpha(t) & 0\\
0 & -\beta(t) & 0 & \alpha(t)
\end{pmatrix}.
\]

\stepparagraph{Step 6: Recover the channel from the unitary.}
By construction,
\[
U_t(x\otimes \ket{e_1})=V_t x.
\]
Hence the channel admits the Stinespring realization
\[
\Phi_t(\rho)=\operatorname{tr}_E\!\left[U_t(\rho\otimes \ket{e_1}\bra{e_1})U_t^\ast\right].
\]

This example shows the same constructive pipeline as in the proof, now carried out explicitly for a simple channel family.

\section{The singularity at $t=0$}
The analytic dilation result of Section 4.2 shows that an analytic dynamical curve admits a finite-dimensional Stinespring dilation whose dilating unitary curve is analytic for all positive times. It is natural to ask whether this regularity can be extended to the initial time $t=0$. In general, the answer is no. The construction of the previous section identifies a distinguished obstruction. Although the Choi matrix $J(\Phi_t)$ depends analytically on $t$, the passage from its spectral decomposition to Kraus operators introduces square roots of the eigenvalues. It is precisely this square-root step that can destroy differentiability at the initial time and thereby produce singular behavior in the corresponding dilation.\\

More concretely, if
\[
J(\Phi_t)=\sum_j\lambda_j(t)\ket{v_j(t)}\bra{v_j(t)}
\]
is the analytic spectral decomposition used in Section 4.2, then the corresponding Kraus operators are built from vectors of the form 
\[
\ket{\psi_j(t)}=\sqrt{\lambda_j(t)}\ket{v_j(t)}.
\]
Even when each $\lambda_j(t)$ is analytic, the derivative of $\sqrt{\lambda_j(t)}$ need not remain bounded at $t=0$. In particular, if $\lambda_j(0)=0$ but $\dot{\lambda}_j(0)\neq0$, then $\sqrt{\lambda_j(t)}$ behaves like $\sqrt{t}$ near the origin, and its derivative behaves like $t^{-1/2}$. Thus the Kraus operators remain continuous, but their derivatives may fail to exist or may diverge at the initial time. Since the dilating unitary curve is constructed from these Kraus operators, this loss of differentiability is inherited by the Hamiltonian
\[
H(t)=i\dot{U_t}U_t^\dagger,
\]
which may therefore become singular (unbounded) at $t=0$.\\

The phenomenon is not an artifact of poor choice of representation. Rather, it reflects a genuine limitation of exact finite-dimensional dilation for dissipative reduced dynamics. In this setting studied by Dive, Mintert, and Burgarth \cite{dive_mintert_burgarth_2015}, the Hamiltonian associated with the dilation can fail to remain bounded at the initial time precisely because the reduced dynamics begin to leave the identity channel in a genuinely dissipative direction. Intuitively, a closed finite-dimensional system evolving under a bounded Hamiltonian starts out unitarily and therefore reversibly, whereas a dissipative reduced evolution already exhibits irreversible behavior at first order in time. The singularity of the dilation Hamiltonian is the mechanism by which the enlarged unitary model compensates for that mismatch.\\

This discussion may be summarized as follows.
\begin{proposition}
    Let $(\Phi_t)_{t\in[0,t_f)}$ be an analytic dynamical curve with $\Phi_0=\mathrm{id}$, and let $(U_t)_{t>0}$ be a finite-dimensional Stinespring dilation obtained by the constructive procedure of Section 4.2. Then the curve $t\mapsto U_t$ need not extend differentiability to $t=0$. In particular, if an eigenvalue $\lambda_j(t)$ of the Choi matrix satisfies
    \[
    \lambda_j(0)=0\hspace{4mm}\text{ and }\hspace{4mm}\dot{\lambda}_j(0)\neq0,
    \]
    then the corresponding Kraus operator derivative may diverge at $t=0$, and the Hamiltonian $H(t)=i\dot{U}_tU_t^\dagger$ may become unbounded there.
\end{proposition}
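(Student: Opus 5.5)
The statement is a "need not" claim, so the plan is to establish it in two parts. The first is a general local computation explaining the mechanism. The second is one explicit analytic curve where the mechanism actually produces divergence, which suffices to show the constructive procedure cannot in general be extended differentiably to $t=0$.

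For the general part, I would first note that $\Phi_0=\mathrm{id}$ forces $J(\Phi_0)=\ket{\Omega}\bra{\Omega}$ with $\ket{\Omega}=\sum_i e_i\otimes e_i$. This matrix has rank one, so all but one of the analytically chosen eigenvalues satisfy $\lambda_j(0)=0$. Since $J(\Phi_t)\geq 0$, each such $\lambda_j$ is a nonnegative analytic function vanishing at $0$. If $\dot\lambda_j(0)\neq 0$, then $\dot\lambda_j(0)>0$ and $\lambda_j(t)=\dot\lambda_j(0)\,t+O(t^2)$. Hence
\[
\frac{d}{dt}\sqrt{\lambda_j(t)}=\frac{\dot\lambda_j(t)}{2\sqrt{\lambda_j(t)}}\sim \frac{\sqrt{\dot\lambda_j(0)}}{2\sqrt{t}},
\]
which diverges as $t\to0^+$. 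Because $\ket{v_j(t)}$ is analytic and of unit norm, $\dot K_j(t)=\mathrm{vec}^{-1}\bigl(\tfrac{d}{dt}\sqrt{\lambda_j}\,v_j+\sqrt{\lambda_j}\,\dot v_j\bigr)$. In this expression the second term stays bounded and the first has norm $\sim t^{-1/2}$, so $\|\dot K_j(t)\|\to\infty$.

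The next step is to pass this to $U_t$. By the lemma in Section 4.2, $U_t(x\otimes\omega)=V_tx=\sum_j K_j(t)x\otimes e_j$. Differentiating gives $\dot U_t(x\otimes\omega)=\sum_j\dot K_j(t)x\otimes e_j$, so
\[
\|\dot U_t\|\geq \|\dot V_t\|\geq \|\dot K_j(t)\|\to\infty .
\]
Unitarity of $U_t$ then gives $\|H(t)\|=\|i\dot U_tU_t^\dagger\|=\|\dot U_t\|$, so $H(t)$ is unbounded near $0$ and $U_t$ cannot be $C^1$ up to $t=0$.

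For the concrete instance, I would use the pure dephasing example of Section 4.2.1. There $\lambda_2(t)=1-e^{-\gamma t}$ satisfies $\lambda_2(0)=0$ and $\dot\lambda_2(0)=\gamma\neq0$, and $\beta(t)=\sqrt{\lambda_2(t)/2}$ has $\dot\beta(t)\sim \tfrac12\sqrt{\gamma/2}\,t^{-1/2}$. The explicit $U_t$ has $\alpha,\beta$ entries, so $\|\dot U_t\|\geq|\dot\beta(t)|\to\infty$. Consequently $H(t)$ is unbounded at $0$, which shows that the "may" in the proposition is actually realized.

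The step I expect to be most delicate is making sure the divergence of a single $\dot K_j$ is not cancelled in $\dot U_t$. For the procedure of Section 4.2 this is immediate, since $V_t$ is a block column of the $K_j$'s with orthogonal ancilla labels, so no cancellation can occur. I would present the statement only for that procedure and not claim that every dilation is singular. That stronger claim would require the Dive–Mintert–Burgarth argument, namely that a $C^1$ unitary dilation with $U_0(x\otimes\omega)=x\otimes\omega'$ yields reduced dynamics whose Kraus rank grows only at order $t^2$, and the proposition does not assert it.
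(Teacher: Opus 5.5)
Your proposal is correct and follows the same route as the paper's discussion preceding the proposition. The factor $\sqrt{\lambda_j(t)}\sim\sqrt{\dot\lambda_j(0)\,t}$ in $K_j(t)$ produces a $t^{-1/2}$ blow-up that is inherited by $U_t$ and $H(t)$, and you make explicit what the paper leaves implicit: the non-cancellation $\|\dot U_t\|\ge\|\dot V_t\|\ge\|\dot K_j(t)\|$, the identity $\|H(t)\|=\|\dot U_t\|$, and the pure-dephasing curve of Section 4.2.1 as a concrete witness.

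If you want ``does not extend differentiably'' rather than merely ``not $C^1$'', add that $K_j(0)=0$ and $\|K_j(t)\|_{\mathrm{HS}}/t=\sqrt{\lambda_j(t)}/t\to\infty$. Then even the one-sided derivative at $t=0$ of $V_t=U_t(\,\cdot\,\otimes\omega)$, and hence of $U_t$, fails to exist.
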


The point of this proposition is not that every analytic dynamical curve develops such a singularity, but that the exact dilation procedure does not, in general, preserve regularity at the initial time. The obstruction arises already at the level of Choi eigenvalues and therefore appears before any particular unitary completion is chosen. In this sense, the singular behavior at $t=0$ is a structural feature of the exact construction rather than a byproduct of a specific choice of dilation.\\

For quantum dynamical semigroups, this observation has an especially clear interpretation. If the generator contains a nontrivial dissipative part, then the reduced dynamics depart from the identity channel in a way that cannot be realized by a bounded system-environment Hamiltonian that is regular at the initial time. Thus, although Section 4.2 guarantees the existence of an exact analytic dilation for positive times, one should not expect that dilation to arise from a Hamiltonian that remains regular at $t=0$ in the genuinely dissipative case. This is the essential limitation of the exact analytic theory and the reason that later approximation results become important.\\

The role of the present section is therefore conceptual as much as technical. Section 4.2 showed that analytic reduced dynamics can be lifted to enlarged unitary dynamics for positive times. The present section shows that this lifting can become singular at the initial time, even in finite dimensions. Section 4.4 will explain how one can weaken the goal from exact realization to approximation and thereby recover a much more flexible dilation theory for regular dynamical curves.

\section{Lipschitz Curves Admit Approximate Stinespring Dilations}
Section 4.2 showed that analytic dynamical curves admit exact finite-dimensional Stinespring dilations for positive times, while Section 4.3 explained why such exact constructions may become singular at the initial time. It is therefore natural to weaken the problem. Rather than asking for an exact realization of a prescribed dynamical curve by a regular unitary evolution on a larger space, one may ask whether such a realization exists approximately. In the finite-dimensional setting, this turns out to be possible under substantially weaker regularity assumptions. In particular, a recent result of vom Ende \cite{vom_ende_2024_stinespring} shows that every Lipschitz-continuous dynamical curve can be approximated arbitrarily well by a finite-dimensional Stinespring curve with controlled regularity. This provides a robust counterpart to the exact analytic theory developed in the previous two sections and shows that the dilation-based picture extends far beyond the analytic case.\\

Before stating the theorem, we record the notation of approximate dilation appropriate to the present setting.

\begin{definition}[Approximate Stinespring dilation]
    Let $\mathcal{I}\subseteq\mathbb{R}$ be a time interval, $\mathcal{H}_S\cong\mathbb{C}^n$, and
    \[
    \Phi:\mathcal{I}\to\mathrm{CPTP}(n),\hspace{4mm}t\mapsto\Phi_t,
    \]
    be a dynamical curve. We say that $\Phi$ admits an \emph{approximate Stinespring dilation} if for every $\epsilon>0$ there exists
    \begin{itemize}
        \item a finite-dimensional ancillary Hilbert space $E_\epsilon$,
        \item a fixed ancilla state $\omega_\epsilon\in\mathcal{D}(E_\epsilon)$, and
        \item a dynamical curve $\Phi^\epsilon:\mathcal{I}\to\mathrm{CPTP}(n)$ of the form
        \[
        \Phi_t^\epsilon(\rho)=\mathrm{Tr}_{E_\epsilon}\left(U_t^\epsilon(\rho\otimes\omega_\epsilon)(U_t^\epsilon)^\dagger\right),\hspace{4mm}t\in\mathcal{I},
        \]
        where $t\mapsto U_t^\epsilon$ is a locally absolutely continuous unitary curve on $\mathcal{H}_S\otimes E_\epsilon$,
    \end{itemize}
    such that
    \[
    ||\Phi-\Phi^\epsilon||_\text{sup}:=\sup_{t\in\mathcal{I}}||\Phi_t-\Phi_t^\epsilon||<\epsilon.
    \]
    In other words, $\Phi$ admits an approximate Stinespring dilation if it can be uniformly approximated by reduced dynamics arising from a regular unitary evolution on a larger finite-dimensional Hilbert space.
\end{definition}

The main approximation result may now be stated.

\begin{theorem}
    Let $n\in\mathbb{N}$, $t_f\in(0,\infty]$, and
    \[
    \Phi:[0,t_f)\to\mathrm{CPTP}(n)
    \]
    be a dynamical curve which is Lipschitz continuous with respect to the diamond norm; that is, suppose there exists $K_\Phi>0$ such that
    \[
    ||\Phi_{t_2}-\Phi_{t_1}||_\diamond\leq K_\Phi|t_2-t_1|\hspace{4mm} \text{ for all } t_1,t_2\in[0,t_f).
    \]
    Then for every $\epsilon>0$, the curve $\Phi$ admits an approximate Stinespring dilation $\Phi^\epsilon$ satisfying 
    \[
    ||\Phi-\Phi^\epsilon||_\text{sup}<\epsilon,
    \]
    with dilation space of dimension at most $4n^2$. Moreover, the ancilla state may be chosen pure, and if $t_f<\infty$, the unitary curve may in fact be chosen analytic. If, in addition, $\Phi_0=\mathrm{id}$, then one may choose the unitary curve so that $U_0=I$.
\end{theorem}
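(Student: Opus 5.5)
We plan to approximate $\Phi$ by a curve that is easy to dilate exactly and regularly, and then transfer the approximation error. The natural route is through the Choi picture, combined with a device that removes the square-root singularity of Section 4.3.

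\textbf{Step 1: regularize the curve.} Fix $\epsilon>0$ and first assume $t_f<\infty$. Mix with the completely depolarizing channel $\Delta(\rho)=\mathrm{Tr}(\rho)I/n$:
\[
\Psi_t:=(1-\delta)\Phi_t+\delta\Delta .
\]
For every $t$ the Choi matrix satisfies $J(\Psi_t)\geq (\delta/n)I$, so it is strictly positive definite, and $\|\Phi_t-\Psi_t\|_\diamond\leq 2\delta$. The curve $t\mapsto J(\Psi_t)$ is still Lipschitz with constant at most $K_\Phi$ up to a dimension factor.

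Next mollify $J(\Psi_t)$ in time, or equivalently replace it by a Bernstein/polynomial approximation on the compact interval. This gives an analytic, Hermitian, positive definite curve $J_t$ with $\sup_t\|J_t-J(\Psi_t)\|$ small; the Lipschitz bound controls this error uniformly. Mollification preserves both positivity and the partial-trace constraint, because both are convex conditions. For polynomial approximation we would instead restore trace preservation by the congruence $J\mapsto (I\otimes A_t^{-1/2})J(I\otimes A_t^{-1/2})$, where $A_t$ is the relevant partial trace. This $A_t$ is analytic and close to $I$, so its square root is analytic.

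\textbf{Step 2: exact regular dilation.} Because $J_t\geq c>0$ uniformly, the positive square root $J_t^{1/2}$ is analytic in $t$. This follows from the holomorphic functional calculus with a contour avoiding $0$, so there is no need for analytic eigenvector selection and no $\sqrt{t}$ singularity. Reshaping the $n^2$ columns of $J_t^{1/2}$ gives analytic Kraus operators $K_j(t)$, $j\leq n^2$. These define an analytic isometry $V_t:\mathbb{C}^n\to\mathbb{C}^n\otimes\mathbb{C}^{n^2}$, and Lemma 4.3 completes it to an analytic unitary $U_t$ with a pure ancilla $\omega$. This yields the analytic unitary curve and the pure ancilla.

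\textbf{Step 3: the normalization $U_0=I$.} When $\Phi_0=\mathrm{id}$, $\Psi_0$ is not the identity, so the completion cannot be expected to give $U_0=I$. We plan to enlarge the ancilla by a qubit $\mathbb{C}^2$, giving $\dim\leq 4n^2$ for the full dilation space up to the counting convention. On this enlarged space we interpolate on a short interval $[0,\tau]$. One branch is the identity, and the other is the dilation $U_t$ above, steered by a controlled rotation of the extra qubit with angle $\theta(t)$, where $\theta(0)=0$ and $\theta$ is analytic and reaches $\pi/2$ quickly.

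The reduced channel is then a convex combination $\cos^2\theta\,\mathrm{id}+\sin^2\theta\,\Psi$-type. On $[0,\tau]$ this lies within $O(K_\Phi\tau+\delta)$ of $\Phi_t$, because $\Phi_t$ is itself within $K_\Phi t$ of $\mathrm{id}$. We choose $\tau$ small relative to $\epsilon$. The unitary curve is then $\mathrm{diag}$-type, $W_t=R(\theta(t))$ conjugating $I\oplus \tilde U_t$, with $\tilde U_t$ rescaled so that $\tilde U_0=I$ can be absorbed. It is analytic, and $W_0=I$.

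\textbf{Step 4: infinite horizon.} For $t_f=\infty$, the statement only asks for local absolute continuity. We would apply Steps 1–2 on $[k,k+1]$ and glue consecutive unitaries by short analytic paths in the connected group $U(n\cdot\dim E)$ that fix $x\otimes\omega\mapsto V_t x$ up to small error. This gives a piecewise-analytic, hence locally absolutely continuous, curve.

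\textbf{Main obstacle.} We expect the main obstacle to be Step 3, and to a lesser extent this gluing. Achieving $U_0=I$ while keeping analyticity and uniform $\epsilon$-closeness near $t=0$ is exactly where the singularity of Section 4.3 reappears. The Lipschitz hypothesis is what makes the short-time interpolation error $O(\tau)$ and hence controllable.
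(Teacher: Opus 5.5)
Your route is sound in outline, but it is genuinely different from the paper's. The paper (following vom Ende) discretizes time, chooses Stinespring isometries for the sampled channels $\Phi(j\delta)$, joins consecutive isometries by regular paths on an enlarged space, and then completes to unitaries. Its approximant agrees with $\Phi$ at the mesh points, and the $4n^2$-dimensional ancilla is the cost of that interpolation and completion.

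You do not join dilations of sampled channels. Instead you push the whole curve into the interior of $\mathrm{CPTP}(n)$, so that the Choi matrices are uniformly positive definite and the square-root obstruction of Section 4.3 cannot occur. Then $J_t^{1/2}$ is analytic by the holomorphic functional calculus, and everything reduces to the exact machinery of Theorem 4.2 and Lemma 4.3. This gives a shorter argument, a smaller ancilla ($n^2$, or $2n^2$ with the flag qubit), and a transparent explanation of why approximation defeats the singularity at $t=0$.

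The price is that the regularized channels have full Kraus rank, so they never equal $\mathrm{id}$ and never match $\Phi$ exactly. That is why $U_0=I$ forces the flag qubit. The flag qubit amounts to a single explicit joining of two dilations, whereas the mesh-based scheme performs such a joining at every step.

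A few details need repair.

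\textbf{Step 3, choice of $\theta$.} A nonconstant analytic $\theta$ cannot reach $\pi/2$ and stay there. Take, e.g., $\theta(t)=\frac{\pi}{2}\tanh(t/\tau)$ and estimate globally. Write $\tilde\Phi_t$ for the channel dilated by $U_t$ and $\Xi_t=\cos^2\theta(t)\,\mathrm{id}+\sin^2\theta(t)\,\tilde\Phi_t$. Then $\|\Xi_t-\Phi_t\|_\diamond\leq\cos^2\theta(t)\min\{2,K_\Phi t\}+\|\tilde\Phi_t-\Phi_t\|_\diamond$, and the first term is $O(K_\Phi\tau)$ uniformly in $t$.

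\textbf{Step 3, the normalization $U_0=I$.} The branch unitary cannot be rescaled so that $\tilde U_0=I$. It dilates $\tilde\Phi_0$, whose Choi matrix has full rank, and replacing it by $U_tU_0^{-1}$ would dilate a different, uncontrolled curve. Normalize the whole unitary instead. Let $R(\theta)\ket{0}=\cos\theta\ket{0}+\sin\theta\ket{1}$ and $W_t=(I\otimes\ket{0}\bra{0}+U_t\otimes\ket{1}\bra{1})(I\otimes R(\theta(t)))$. Then $W_0=I\otimes\ket{0}\bra{0}+U_0\otimes\ket{1}\bra{1}$ fixes $\mathbb{C}^n\otimes E\otimes\ket{0}$ pointwise. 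Hence $W_tW_0^{-1}$ is analytic, induces the same curve $\Xi_t$, and equals $I$ at $t=0$.

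\textbf{Step 1, smoothing.} A compactly supported mollifier yields only $C^\infty$ regularity, not analyticity. Use Bernstein polynomials instead: on $[0,t_f]$ they are pointwise convex combinations of sampled Choi matrices, so positivity, the lower bound and trace preservation survive and no congruence is needed. Alternatively use a Gaussian kernel, applied after extending $J(\Psi_t)$ constantly to $t<0$. This gives an entire Choi curve on all of $[0,\infty)$, with uniform error proportional to the kernel width times the Lipschitz constant.

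\textbf{Step 4.} With the Gaussian kernel, Steps 1--3 work for $t_f=\infty$ as well, so Step 4 can be dropped. As written, its gluing paths that keep $U(x\otimes\omega)$ close to $V_tx$ are asserted rather than constructed.
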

A full proof of this theorem is not reproduced here. Instead, we summarize the main idea of the construction, since it follows that the result is not merely existential but constructive in spirit.\\

The starting point is a discretization of the original Lipschitz curve $\Phi$. One chooses a sufficiently fine time mesh
\[
0,\delta,2\delta,\dots
\]
and samples the channels $\Phi(j\delta)$ along that mesh. At each sample point, one selects a finite-dimensional Stinespring isometry for the corresponding channel. Since each sampled channel is a single completely positive trace-preserving map on $\mathbb{C}^n$, such an isometry exists by the finite-dimensional Stinespring theorem. The central task is then to connect these sampled dilations in a coherent way so as to obtain a regular time-dependent unitary model on the larger Hilbert space. The Lipschitz assumption ensures that nearby sample points correspond to nearby channels, and this control allows one to interpolate between the associated dilations without losing uniform accuracy. In this way, the original curve is approximated by a new curve arising from a genuine Stinespring dilation with regular time dependence.\\

More concretely, the proof proceeds by a "connect-the-dots" strategy. After discretizing the interval, one chooses Stinespring isometries $V_j$ for the sampled channels $\Phi(j\delta)$. These isometries are then joined by carefully constructed paths on the enlarged space, producing a regular family $V_t^\epsilon$ of Stinespring isometries that agrees with the chosen samples at the mesh points and remains close to the original curve between them. A regularity-preserving completion step then promotes this isometry-valued curve to a unitary-valued curve $U_t^\epsilon$. Taking the partial trace over the finite ancilla yields an approximating dynamical curve $\Phi_t^\epsilon$ whose distance from $\Phi_t$ is uniformly controlled by the mesh size and the Lipschitz constant of the original dynamics. The theorem's dimension bound reflects the ancillary degrees of freedom needed to carry out this interpolation and completion procedure uniformly in time.\\

From the perspective of the present thesis, the significance of Theorem 4.5 is twofold. First, it shows that the finite-dimensional dilation framework remains meaningful even when the strong analyticity assumptions of Section 4.2 are dropped. Lipschitz continuity is a much weaker requirement, yet it still suffices to recover enlarged unitary models up to arbitrary accuracy. Second, the proof is constructive in a useful sense. It does not establish approximation by a purely abstract compactness argument, but rather by an explicit approximation-and-interpolation procedure. For this reason, the theorem supports the broader viewpoint that finite-dimensional dilation models may serve as tractable surrogate descriptions of reduced dynamics, even when exact regular dilations are unavailable.\\

This approximation result therefore complements the earlier exact theory. Section 4.2 showed that analytic dynamics admit exact finite-dimensional dilations for positive times, while Section 4.3 showed that the exact construction may develop singular behavior at the initial time. Theorem 4.5 shows that if one is willing to replace exact realization by arbitrarily accurate approximation, then a much broader class of dynamical curves can still be lifted to regular finite-dimensional unitary evolutions on enlarged Hilbert spaces. In this sense, approximation restores much of the flexibility lost in the exact setting and completes the picture developed in this chapter.

\chapter{Conclusion}

The central aim of this thesis has been to study quantum dynamics through a change of perspective. In the usual treatment of open quantum systems, one works directly with reduced dynamics, where the subsystem of interest is described effectively while the environment is suppressed. This remains one of the most useful and natural ways to model open-system behavior. The main point of the present thesis has been that this is not the only productive viewpoint. An alternative is to begin with the reduced evolution and ask whether it can be realized as part of a larger unitary dynamics on an extended Hilbert space. In this way, the problem shifts from describing dissipation and noise only at the subsystem level to understanding how such behavior may arise from a system-environment model in the first place.

Within the finite-dimensional setting adopted throughout this work, this question can be formulated cleanly in the language of quantum channels and dynamical curves. Here the parameter $n$ denotes the dimension of the system Hilbert space $\mathcal{H}_S \cong \mathbb{C}^n$. In particular, the framework includes the standard finite-dimensional models used in quantum information theory: a single qubit corresponds to $n=2$, an $m$-qubit register corresponds to $n=2^m$, and more generally a register of $m$ qudits of local dimension $d$ corresponds to $n=d^m$. In this sense, the finite-dimensional setting of the thesis is broad enough to include the usual small-register models of quantum computing and quantum noise, while remaining far simpler than the infinite-dimensional environments that often appear in microscopic open-system models.

This perspective organizes the main developments of the thesis. After introducing quantum channels and their Choi, Kraus, and Stinespring representations, time-dependent evolutions were reformulated as channel-valued dynamical curves. This made it possible to pose the dilation problem in a genuinely dynamical form: given a curve of quantum channels, when does there exist a correspondingly regular unitary dilation on a larger finite-dimensional Hilbert space? For analytic curves, the answer is positive on positive times: exact finite-dimensional Stinespring dilations exist, so a broad class of reduced dynamics can indeed be lifted to enlarged unitary models. At the same time, the singular behavior that may occur at $t=0$ shows that such reconstructions are not completely regular in general, and that this obstruction is tied to genuinely dissipative behavior rather than to a defect of the method.

The approximation results discussed later in the thesis extend this picture in an important way. Even when exact regular dilations are not available, Lipschitz-continuous dynamical curves may still be approximated arbitrarily well by finite-dimensional Stinespring curves. Thus the dilation-based viewpoint is not restricted to especially rigid or idealized dynamics. Instead, it remains meaningful for a much wider class of open-system evolutions, while still preserving a mathematically controlled relationship between the reduced description and an enlarged unitary one.

These results are primarily structural and mathematical, but they also fit naturally into a broader context. In quantum information and quantum computing, quantum channels are the standard language for noise, information transfer, and reduced subsystem evolution, with familiar examples including amplitude damping, phase damping, and depolarizing channels. From that point of view, the present thesis concerns time-dependent families of exactly the kinds of maps that already appear throughout the theory of quantum noise and error modeling. The finite-dimensional assumption is also natural from this perspective, since one often studies qubits, few-qubit registers, or other finite-level systems rather than infinite-dimensional baths.

A second point of context comes from quantum simulation. One motivation for dilation-based methods is that they offer a way to represent non-unitary reduced dynamics by unitary evolution on a larger but still finite-dimensional system. In the literature on simulation of dissipative dynamics, this idea appears explicitly as the problem of reproducing open-system channel evolution continuously in time by coupling the system to a finite ancilla and allowing time dependence in the larger Hamiltonian. From this point of view, the results discussed in this thesis help clarify when such enlarged finite-dimensional models exist exactly, when singular behavior can occur, and when approximation is the more flexible alternative.

A third point of context comes from control and noise modeling. Since reduced dynamics are often used to describe decoherence and dissipation, it is natural to ask how those same processes appear when lifted to a larger system-environment model. In that setting, the choice of dilation is not merely a formal matter; it can affect how one formulates questions about control, decoupling, and surrogate microscopic descriptions of noisy evolution. The present thesis does not develop control protocols, but it does contribute to the mathematical framework in which such questions can be posed.

Taken together, these results support the conclusion that reduced dynamics and dilation theory should be viewed as complementary rather than competing descriptions. The reduced picture remains indispensable when the goal is to model the effective evolution of a subsystem as efficiently as possible. The dilation picture becomes valuable when one wishes to understand how that effective evolution may be embedded into a larger dynamical structure. In this sense, the thesis has argued that a change of perspective can lead to new ways of formulating problems in open quantum systems. The main conclusion is therefore not that reduced dynamics should be replaced, but that finite-dimensional dilation theory provides a mathematically meaningful bridge between channel-based descriptions of open-system evolution and enlarged unitary models of the same dynamics.

\appendix
\renewcommand{\chaptername}{APPENDIX}
\renewcommand{\cftchappresnum}{Appendix\ }
\renewcommand{\cftchapaftersnum}{:}

\chapter{Mathematical and Quantum-Mechanical Preliminaries}
This appendix collects the mathematical and quantum-mechanical background used throughout the thesis. Its purpose is not to provide a comprehensive introduction to these subjects, but rather to establish the main definitions, notation, and standard facts needed in the main chapters of the thesis. Since this thesis is concerned entirely with finite-dimensional systems, all material is presented in that setting.\\

The discussion is intentionally brief and focused. Only those concepts that are used directly in the main text are included, and most results are stated without proofs unless a short argument is especially helpful for clarity. Readers seeking a more detailed treatment of the underlying mathematics or of the standard formalism of quantum mechanics may consult, for example, Nielsen and Chuang \cite{nielsen_chuang_2010}, Watrous \cite{watrous_2018}, or other standard references.

\section{Finite-Dimensional Hilbert Spaces and Notation}

Throughout this thesis, all Hilbert spaces are assumed to be finite-dimensional complex Hilbert spaces. This restriction is sufficient for all of the results considered here and allows the mathematical framework to remain entirely within the setting of finite-dimensional linear algebra. Unless stated otherwise, a Hilbert space will be denoted by \(\mathcal{H}\), and the algebra of linear operators on \(\mathcal{H}\) by \(\mathcal{B}(\mathcal{H})\).

The inner product on \(\mathcal{H}\) will be written as \(\langle \cdot,\cdot\rangle\), and vectors will frequently be expressed using Dirac's bra-ket notation. Thus, a vector in \(\mathcal{H}\) may be written as \(\ket{\psi}\), while its adjoint is written as \(\bra{\psi}\). Given vectors \(\ket{\psi},\ket{\phi}\in \mathcal{H}\), the rank-one operator \(\ket{\psi}\bra{\phi}\in \mathcal{B}(\mathcal{H})\) is defined by
\[
(\ket{\psi}\bra{\phi})(\ket{\chi})=\langle \phi,\chi\rangle \ket{\psi}
\]
for all \(\ket{\chi}\in \mathcal{H}\).

If \(\dim \mathcal{H} = n\), then upon choosing an orthonormal basis \(\{e_1,\dots,e_n\}\), the space \(\mathcal{H}\) may be identified with \(\mathbb{C}^n\), and each operator in \(\mathcal{B}(\mathcal{H})\) may be represented by an \(n\times n\) complex matrix. Although many constructions may therefore be written in matrix form, we will generally use basis-independent notation unless a particular choice of basis is convenient.

For Hilbert spaces \(\mathcal{H}_1\) and \(\mathcal{H}_2\), the notation \(\mathcal{H}_1 \cong \mathcal{H}_2\) indicates that they are isomorphic as Hilbert spaces. The identity operator on \(\mathcal{H}\) will be denoted by \(I_\mathcal{H}\), or simply by \(I\) when the underlying space is clear from context. We write \(\dim(\mathcal{H})\) for the dimension of \(\mathcal{H}\).

The finite-dimensional setting adopted here ensures that all linear operators are bounded, every subspace is closed, and every operator admits a matrix representation with respect to an orthonormal basis. Standard background on finite-dimensional Hilbert spaces and operator notation may be found in, for example, \cite{nielsen_chuang_2010,watrous_2018}.

\section{Linear Operators on Hilbert Spaces}

Let \(\mathcal{H}\) be a finite-dimensional Hilbert space. We write \(\mathcal{B}(\mathcal{H})\) for the set of all linear operators from \(\mathcal{H}\) to itself. Since \(\mathcal{H}\) is finite-dimensional, every linear operator on \(\mathcal{H}\) is automatically bounded.

For \(A \in \mathcal{B}(\mathcal{H})\), the \emph{adjoint} of \(A\) is the unique operator \(A^* \in \mathcal{B}(\mathcal{H})\) satisfying
\[
\langle A\psi,\phi\rangle = \langle \psi,A^*\phi\rangle
\]
for all \(\psi,\phi \in \mathcal{H}\). An operator \(A\) is called \emph{self-adjoint (Hermitian)} if \(A=A^*\), \emph{unitary} if \(A^*A=AA^*=I\), and \emph{positive} if
\[
\langle \psi,A\psi\rangle \geq 0
\]
for all \(\psi \in \mathcal{H}\). A self-adjoint idempotent operator \(P\in \mathcal{B}(\mathcal{H})\), that is, an operator satisfying \(P^2=P\) and \(P^*=P\), is called an \emph{orthogonal projection}. \\
Often, \(A^\dagger\) is used instead of \(A^*\) for the adjoint of an operator.

These classes of operators play a central role in quantum theory: self-adjoint operators describe observables and Hamiltonians, unitary operators describe closed-system evolution, and positive operators describe states and other physically meaningful quantities. Standard references include \cite{nielsen_chuang_2010,watrous_2018}.

\section{Trace, Operator Inner Products, and Basic Finite-Dimensional Facts}

If \(A \in \mathcal{B}(\mathcal{H})\) and \(\{e_1,\dots,e_n\}\) is an orthonormal basis for \(\mathcal{H}\), the \emph{trace} of \(A\) is defined by
\[
\mathrm{Tr}(A)=\sum_{i=1}^n \langle e_i,Ae_i\rangle.
\]
This quantity is independent of the choice of orthonormal basis. The trace is linear and satisfies the cyclicity relation
\[
\mathrm{Tr}(AB)=\mathrm{Tr}(BA)
\]
for all \(A,B\in \mathcal{B}(\mathcal{H})\).

The space \(\mathcal{B}(\mathcal{H})\) itself becomes a Hilbert space when equipped with the \emph{Hilbert--Schmidt inner product}
\[
\langle A,B\rangle_{\mathrm{HS}}=\mathrm{Tr}(A^*B).
\]
The corresponding norm is the Hilbert--Schmidt norm
\[
\|A\|_{\mathrm{HS}}=\sqrt{\mathrm{Tr}(A^*A)}.
\]
Since all spaces under consideration are finite-dimensional, many analytical complications do not arise: all norms are equivalent, every linear map is continuous, and every operator admits a matrix representation with respect to an orthonormal basis. Further background may be found in \cite{watrous_2018}.

\section{Tensor Products of Spaces and Operators}

If \(\mathcal{H}_1\) and \(\mathcal{H}_2\) are finite-dimensional Hilbert spaces, their tensor product is denoted by \(\mathcal{H}_1 \otimes \mathcal{H}_2\). This is again a finite-dimensional Hilbert space, and if \(\{e_i\}\) is an orthonormal basis for \(\mathcal{H}_1\) and \(\{f_j\}\) is an orthonormal basis for \(\mathcal{H}_2\), then \(\{e_i\otimes f_j\}_{i,j}\) is an orthonormal basis for \(\mathcal{H}_1\otimes \mathcal{H}_2\).

Given operators \(A \in \mathcal{B}(\mathcal{H}_1)\) and \(B \in \mathcal{B}(\mathcal{H}_2)\), the tensor product operator \(A\otimes B \in \mathcal{B}(\mathcal{H}_1\otimes \mathcal{H}_2)\) is defined on simple tensors by
\[
(A\otimes B)(\psi\otimes\phi)=A\psi\otimes B\phi,
\]
and extends linearly to all of \(\mathcal{H}_1\otimes \mathcal{H}_2\). The identity operator on a tensor product is written \(I_{\mathcal{H}_1}\otimes I_{\mathcal{H}_2}\), and operators of the form \(A\otimes I\) or \(I\otimes B\) are used to represent operators acting nontrivially on only one factor.

Tensor products provide the mathematical framework for composite quantum systems and for the enlarged system-environment spaces that appear throughout the thesis. Standard treatments may be found in \cite{nielsen_chuang_2010,watrous_2018}.

\section{Density Operators and Quantum States}

In the finite-dimensional formalism of quantum mechanics, the states of a system with Hilbert space \(\mathcal{H}\) are represented by \emph{density operators}. A density operator is an operator \(\rho \in \mathcal{B}(\mathcal{H})\) satisfying
\[
\rho \geq 0
\qquad \text{and} \qquad
\mathrm{Tr}(\rho)=1.
\]
The set of all density operators on \(\mathcal{H}\) will be denoted by \(\mathcal{D}(\mathcal{H})\).

A state is called \emph{pure} if it is of the form
\[
\rho = \ket{\psi}\bra{\psi}
\]
for some unit vector \(\psi \in \mathcal{H}\). More generally, a state is called \emph{mixed} if it can be written as a nontrivial convex combination of pure states. Thus, if \(\{\psi_i\}\subseteq \mathcal{H}\) are unit vectors and \(\{p_i\}\) is a probability distribution, then
\[
\rho=\sum_i p_i \ket{\psi_i}\bra{\psi_i}
\]
is a density operator.

This density-operator formalism is the natural setting for the treatment of both closed and open quantum systems. See \cite{nielsen_chuang_2010,watrous_2018} for further discussion.

\section{Closed-System Evolution}

For a closed quantum system with Hilbert space \(\mathcal{H}\), time evolution is described by a family of unitary operators \(U_t \in \mathcal{B}(\mathcal{H})\). In the density-operator formalism, the evolution of a state \(\rho\) is given by
\[
\rho \mapsto U_t \rho\, U_t^*.
\]
Such evolutions preserve positivity and trace, and therefore map density operators to density operators.

If the system is governed by a Hamiltonian \(H\), then the corresponding state evolution in the Schr\"odinger picture is described by the von Neumann equation
\[
\frac{d}{dt}\rho(t) = -i[H,\rho(t)],
\]
where
\[
[H,\rho]=H\rho-\rho H
\]
denotes the commutator. This provides the standard mathematical description of closed-system quantum dynamics. Background may be found in \cite{nielsen_chuang_2010,breuer_petruccione_2002}.

\section{Partial Trace and Reduced States}

Let \(\mathcal{H}_S\) and \(\mathcal{H}_E\) be finite-dimensional Hilbert spaces, interpreted respectively as the state spaces of a system and an environment. If \(\rho \in \mathcal{B}(\mathcal{H}_S\otimes \mathcal{H}_E)\), the \emph{partial trace over the environment} is the unique operator \(\mathrm{Tr}_E(\rho)\in \mathcal{B}(\mathcal{H}_S)\) characterized by
\[
\mathrm{Tr}\bigl(A\,\mathrm{Tr}_E(\rho)\bigr)=\mathrm{Tr}\bigl((A\otimes I_{\mathcal{H}_E})\rho\bigr)
\]
for all \(A\in \mathcal{B}(\mathcal{H}_S)\). Similarly, one defines the partial trace over the system, denoted \(\mathrm{Tr}_S\).

If \(\rho\) is a density operator on \(\mathcal{H}_S\otimes \mathcal{H}_E\), then
\[
\rho_S := \mathrm{Tr}_E(\rho)
\]
is again a density operator, called the \emph{reduced state} of the system. The partial trace is therefore the standard operation used to pass from the description of a composite system to that of a subsystem.

This construction is fundamental in the study of open quantum systems and reduced dynamics. Standard references include \cite{nielsen_chuang_2010,watrous_2018,breuer_petruccione_2002}.

\section{Spectral Decomposition and Positive Square Roots}

Many operator-theoretic constructions used in finite-dimensional quantum theory rely on the spectral theorem. In particular, every self-adjoint operator \(A \in \mathcal{B}(\mathcal{H})\) admits a spectral decomposition
\[
A=\sum_{j=1}^m \lambda_j P_j,
\]
where the \(\lambda_j\) are real numbers and the \(P_j\) are pairwise orthogonal projections satisfying \(\sum_j P_j = I\).

If \(A\) is positive, then all eigenvalues \(\lambda_j\) are nonnegative. In this case one may define
\[
A^{1/2}=\sum_{j=1}^m \lambda_j^{1/2} P_j.
\]
This operator is again positive and satisfies
\[
(A^{1/2})^2=A.
\]
Moreover, \(A^{1/2}\) is the unique positive square root of \(A\).

These standard facts will be used repeatedly in later operator constructions. For further details, see \cite{watrous_2018}.

\section{Regularity of Matrix-Valued Functions}

Later chapters consider time-dependent families of operators and maps. Since all spaces in this thesis are finite-dimensional, regularity questions may be treated entrywise in any matrix representation.

Let \(\mathcal{I}\subseteq \mathbb{R}\) be an interval and let \(A:\mathcal{I}\to \mathcal{B}(\mathcal{H})\) be a matrix-valued function. One says that \(A\) is \emph{continuous}, \emph{differentiable}, or \emph{analytic} if each matrix entry of \(A(t)\) has the corresponding property as a complex-valued function of \(t\). In finite dimensions, these notions are independent of the particular basis chosen.

Similarly, \(A\) is called \emph{Lipschitz continuous} on \(\mathcal{I}\) if there exists a constant \(K\geq 0\) such that
\[
\|A(t)-A(s)\| \leq K|t-s|
\]
for all \(s,t\in \mathcal{I}\), where \(\|\cdot\|\) may be any fixed norm on \(\mathcal{B}(\mathcal{H})\). Since all norms on a finite-dimensional space are equivalent, the precise choice of norm is immaterial for such qualitative regularity considerations.

These notions will be used when discussing regularity properties of time-dependent quantum evolutions and their dilations.

\clearpage
\UTSAVitaTOCEntry
\chapter*{VITA}
\thispagestyle{empty}
Caleb A. Mickelson is passionate about making complex ideas easier to understand. Although he did not begin taking academics seriously until later in life, he quickly developed a love for mathematics after first studying computer science. He earned his bachelor's degree in mathematics from the University of Texas at San Antonio in 2022 and later began teaching high school mathematics, where he came to value clear and thoughtful presentation of difficult concepts.

Driven by a strong desire to understand the universe at a fundamental level, Caleb has pursued physics largely through self-study. Before beginning this thesis, he had no formal background in quantum mechanics and taught himself the foundations of quantum systems along the way. That process deepened both his appreciation for mathematics and his interest in explaining difficult ideas in an accessible way. The past year has been deeply humbling. Having had the opportunity to work near the edge of research in quantum theory, he is honored to contribute, even in a small way, to the growing body of knowledge on quantum systems and looks forward to future understanding both within quantum theory and beyond.


\begin{thebibliography}{99}
\addcontentsline{toc}{chapter}{Bibliography}
\bibitem{breuer_petruccione_2002}
H.-P. Breuer and F. Petruccione,
\emph{The Theory of Open Quantum Systems},
Oxford University Press, Oxford, 2002.

\bibitem{nielsen_chuang_2010}
M. A. Nielsen and I. L. Chuang,
\emph{Quantum Computation and Quantum Information},
10th Anniversary ed.,
Cambridge University Press, Cambridge, 2010.

\bibitem{watrous_2018}
J. Watrous,
\emph{The Theory of Quantum Information},
Cambridge University Press, Cambridge, 2018.

\bibitem{kato_1995}
T. Kato,
\emph{Perturbation Theory for Linear Operators},
Classics in Mathematics,
Springer, Berlin, Heidelberg, 1995.

\bibitem{wiseman_milburn_2010}
H. M. Wiseman and G. J. Milburn,
\emph{Quantum Measurement and Control},
Cambridge University Press, Cambridge, 2010.

\bibitem{manzano_2020}
D. Manzano,
``A short introduction to the Lindblad master equation,''
\emph{AIP Advances} \textbf{10}, 025106 (2020).
doi:10.1063/1.5115323

\bibitem{dive_mintert_burgarth_2015}
B. Dive, F. Mintert, and D. Burgarth,
``Quantum simulations of dissipative dynamics: time-dependence instead of size,''
\emph{Physical Review A} \textbf{92}, 032111 (2015).
doi:10.1103/PhysRevA.92.032111

\bibitem{burgarth_facchi_hillier_2023}
D. Burgarth, P. Facchi, and R. Hillier,
``Control of Quantum Noise: On the Role of Dilations,''
\emph{Annales Henri Poincar\'e} \textbf{24}, 325--347 (2023).
doi:10.1007/s00023-022-01211-y

\bibitem{vom_ende_2023_church}
F. vom Ende,
``Quantum-Dynamical Semigroups and the Church of the Larger Hilbert Space,''
\emph{Open Systems \& Information Dynamics} \textbf{30}, 2350003 (2023).
doi:10.1142/S1230161223500038

\bibitem{vom_ende_2024_stinespring}
F. vom Ende,
``Finite-Dimensional Stinespring Curves Can Approximate Any Dynamics,''
\emph{Open Systems \& Information Dynamics} \textbf{31}, 2450004 (2024).
doi:10.1142/S1230161224500045

\bibitem{gorini_kossakowski_sudarshan_1976}
V. Gorini, A. Kossakowski, and E. C. G. Sudarshan,
``Completely Positive Dynamical Semigroups of N-Level Systems,''
\emph{Journal of Mathematical Physics} \textbf{17}(5), 821--825 (1976).
doi:10.1063/1.522979

\bibitem{lindblad_1976}
G. Lindblad,
``On the Generators of Quantum Dynamical Semigroups,''
\emph{Communications in Mathematical Physics} \textbf{48}(2), 119--130 (1976).
doi:10.1007/BF01608499

\bibitem{mit_ocw_open_quantum_systems}
P. Cappellaro,
``22.51 Course Notes, Chapter 8: Open Quantum Systems,''
\emph{MIT OpenCourseWare}, 22.51 Quantum Theory of Radiation Interactions, Fall 2012.
\url{https://ocw.mit.edu/courses/22-51-quantum-theory-of-radiation-interactions-fall-2012/resources/mit22_51f12_ch8/}

\bibitem{sanchez_palencia_2025}
L. Sanchez-Palencia,
``Lecture 6: Open quantum systems I: Kraus--Lindblad approach,''
lecture notes for \emph{Physics of Quantum Information}, Master QLMN, 17 October 2025.\\
\url{https://atom-tweezers-io.org/wp-content/uploads/2025/10/pqi2025-lecture6-open_quantum_systems.pdf}

\bibitem{stefanini_ziolkowska_budker_etal_2025}
M. Stefanini, A. A. Ziolkowska, D. Budker, U. Poschinger, F. Schmidt-Kaler,
A. Browaeys, A. Imamoglu, D. Chang, and J. Marino,
``Is Lindblad for me?''
\emph{arXiv preprint} arXiv:2506.22436 [quant-ph], 2025.

\end{thebibliography}
\end{document}